\documentclass[runningheads,a4paper]{article}

\usepackage{amsmath}
\usepackage{amsthm} 
\usepackage{amssymb}

\usepackage{txfonts}
\usepackage[utf8]{inputenc}
\usepackage[T1]{fontenc}
\usepackage{lmodern}
\usepackage{hyperref}

\setlength{\paperheight}{297mm}%
\setlength{\paperwidth}{210mm}
\setlength{\textheight}{24.5cm}
\setlength{\textwidth}{17cm}
\setlength{\headsep}{25pt}
\setlength{\footskip}{40pt}
\setlength{\voffset}{-1in}
\setlength{\topmargin}{1cm}
\setlength{\oddsidemargin}{2cm}
\setlength{\evensidemargin}{2cm}
\setlength{\hoffset}{-1in}
\setlength{\parskip}{\baselineskip} 

\usepackage{graphicx}
\usepackage{color}

\newcommand{\VP}{\textsc{VP}}
\newcommand{\VNP}{\textsc{VNP}}
\newcommand{\Ferm[1]}{\text{Ferm}^{#1}}
\newcommand{\Ham}{\text{Ham}}
\newcommand{\per}{\text{per}}
\newcommand{\CC}{\text{CC}}
\newcommand{\im}{\text{im}}
\newcommand{\p}{\textsc{P}}
\newcommand{\NP}{\textsc{NP}}

\newtheorem{thm}{Theorem}
\newtheorem*{thm*}{Theorem}
\newtheorem{lem}{Lemma}
\newtheorem{prop}{Proposition}
\newtheorem{coro}{Corollary}
\theoremstyle{definition}

\title{Determinant versus Permanent: salvation via generalization?\\
\large{The algebraic complexity of the Fermionant and the Immanant}}
\author{N. de Rugy-Altherre\\
\small{Univ Paris Diderot, Sorbonne Paris Cité,}\\
\small{Institut de Mathématiques de Jussieu, UMR 7586 CNRS,} \\
\small{F-75205 Paris, France}\\
\small{nderugy@math.univ-paris-diderot.fr}} 
\begin{document}
 
\maketitle

\abstract
The fermionant $\Ferm[k]_n(\bar x) = \sum_{\sigma \in S_n} (-k)^{c(\pi)}\prod_{i=1}^n x_{i,j}$ can be seen as a generalization of both the permanent (for $k=-1$) and the determinant (for $k=1$). We demonstrate that it is $\VNP$-complete for any rational $k \neq 1$. Furthermore it is $\#P$-complete for the same values of $k$. The immanant is also a generalization of the permanent (for a Young diagram with a single line) and of the determinant (when the Young diagram is a column). We demonstrate that the immanant of any family of Young diagrams with bounded width and at least $n^{\epsilon}$ boxes at the right of the first column is $\VNP$-complete.

\section{Introduction}
In algebraic complexity (more specifically Valiant's model\cite{Bur00}) one of the main question is to know whether $\VP=\VNP$ or not. Answering this is considered to be a very good step towards the resolution of $P = NP$. This question is very close to the question $\per$ vs. $\det$, where we ask if the permanent can be computed in polynomial time in the size of the matrix, as is the determinant. 

The main idea of this paper is to find a generalization of both the permanent and the determinant in order to study exactly where the difference between them lies. A generalization is here understood as a parameter, let us say $t$, and a function $f(t,\bar x)$ such that for example $f(0,\bar x) = \det(\bar x)$ and $f(1,\bar x) = \per(\bar x)$. If we have a complete classification of the complexity of $f(t,\bar x)$ for any $t$ (with $t$ fixed), we should be able to see where we step from $\VP$ to $\VNP$ and maybe understand a little bit more why the permanent is hard and not the determinant.

Here we study two different generalizations. First the fermionant, secondly the immanant. The fermionant was introduced by Chandrasekharan and Wiese~\cite{CW11} in 2011 in a context of quantum physics. It is defined with a real parameter $k$ such that for $k=1$ it is the determinant and for $k=-1$ it is the permanent. Mertens and Moore~\cite{MM11} have demonstrated its hardness for $k\geq 3$ (and with a weaker hardness for $k = 2$), in the framework of counting complexity.

Likewise, but in a different framework and with a complete different proof, we demonstrate the hardness of the fermionant seen as a polynomial for any rational $k\neq 1$ (and of course for $k \neq 0$). This give a interesting point of view on where the hardness of the permanent lies. We also get a bonus: we use a technique developed by Valiant to demonstrate the hardness of the fermionant in the counting complexity framework for $k \neq 1$. We thus extend the results of Mertens and Moore~\cite{MM11}, in particular to the case $k=2$, which is, from what I understand, the most interesting case for physicists.

The second generalization is more classical and comes from the field of group representation. It is the immanant, introduced by Littlewood~\cite{Lit40} in 1940. Immanants are families of polynomials indexed by Young diagrams. If the Young diagrams are a single column with $n$ boxes, the immanant is the determinant. At the opposite end, if it is a single line of $n$ boxes, the immanant is the permanent. The main question is: for which Young diagrams do we step from $\VP$ to $\VNP$?

We know that if there are only a finite number of boxes on the right of the first column, the immanant is still in $\VP$ (cf \cite{Bur00}). On the other hand, a few hardness results have been found, fundamentally for Young diagrams in which the permanent is hidden. For example, the hook (a line of $n$ boxes and a column of any number of boxes) and the rectangle (any number of lines each with $n$ boxes) are hard (cf~\cite{Bur00}), or more generally if the maximal difference between the size of two consecutive lines is as big as a power of $n$ (cf~\cite{BB03}).  

Here we will demonstrate that for Young diagrams with only two columns, each with $n$ boxes, the immanant is hard, which was an open question (cf~\cite{Bur00} Problem 7.1). As each line of these Young diagrams has length no more than two, the permanent is not hidden in there. More generally for any family of Young diagrams with a bounded number of columns and with at least $n^{\epsilon}$ boxes at the right of the first column, the immanant is hard. It has been conjecture that it is still hard if we remove the bounded condition(cf~\cite{MM11} for example).

For a  complete classification of the immanant in algebraic complexity, one "just" has to determine the complexity of the ziggurat: the Young diagrams where the first line has $n$ boxes, the second $n-1$, the third $n-2$ etc. and the last $1$ box. This immanant is most probably also hard. The complexity of the immanant with a logarithmic number of boxes to the right of the first column is also unknown.

\section{Definitions}

We work within Valiant's algebraic framework. Here is a brief introduction to this complexity theory. For a more complete overview, see~\cite{Bur00}.

An \textit{arithmetic circuit} over $\mathbb Q$ is a labeled directed acyclic connected graph with vertices of indegree $0$ or $2$ and only one sink. The vertices with indegree $0$ are called \textit{input gates} and are labeled with variables or constants from $\mathbb Q$. The vertices with indegree $2$ are called \textit{computation gates} and are labeled with $\times$ or $+$. The sink of the circuit is called the \textit{output gate}.

The polynomial computed by a gate of an arithmetic circuit is defined by induction: an input gate computes its label; a computation gate computes the product or the sum of its children's values. The polynomial computed by an arithmetic circuit is the polynomial computed by the sink of the circuit.

A \textit{p-family} is a sequence $(f_n)$ of polynomials such that the number of variables as well as the degree of $f_n$ is polynomially bounded in $n$. The \textit{complexity} $L(f)$ of a polynomial $f \in \mathbb Q[x_1, \hdots, x_n]$ is the minimal number of computational gates of an arithmetic circuit computing $f$ from variables $x_1, \hdots, x_n$ and constants in $\mathbb Q$. 

Two of the main classes in this theory are: the analog of $\p$, $\VP$, which contains of every p-family $(f_n)$ such that $L(f_n)$ is a function polynomially bounded in $n$; and the analog of $\NP$, $\VNP$. A p-family $(f_n)$ is in $\VNP$ iff there exists a $\VP$ family $(g_n)$ such that for all $n$, 
\[f_n(x_1,\hdots, x_n) = \sum_{\bar{\epsilon} \in \{0,1\}^n} g_n(x_1,\hdots, x_n, \epsilon_1, \hdots, \epsilon_n)\]

As in most complexity theory we have a notion of reduction, the c-reduction: the oracle complexity $L^g(f)$ of a polynomial $f$ with oracle access to $g$ is the minimum number of computation gates and evaluations of $g$ over previously computed values that are sufficient to compute $f$ from the variables $x_1, \hdots x_n$ and constants from $\mathbb Q$. A p-family $(f_n)$ \textit{c-reduces} to $(g_n)$ if there exists a polynomially bounded function $p$ such that $L^{g_{p(n)}}(f_n)$ is a polynomially bounded function. 

$\VNP$ is closed under c-reductions (See~\cite{Poi08} for an idea of the proof). However this reduction does not distinguish lower classes. For example, $0$ is $\VP$-complete for c-reductions. In this paper we will demonstrate hardness results, a smallest notion of reduction (as projection) is thus not needed.

The determinant is in $\VP$. The permanent is $\VNP$-complete for $c$-reductions (\cite{Bur00}).

\section{The fermionant}

Let $A$ be an $n \times n$ matrix. The \textit{fermionant} of $A$, with parameter $k$ is defined as
	\[\Ferm[k] A = \left(-1\right)^n \sum_{\pi \in S_n} \left(-k\right)^{c(\pi)}\prod_{i=1}^n A_{i,\pi(i)}\] 
where $S_n$ denotes the symmetric group of $n$ objects and, for any permutation $\pi \in S_n$, $c(\pi)$ denotes the number of cycles of $\pi$. To study the complexity of such a function, we work within the algebraic complexity framework. The algebraic equivalent of the fermionant is the polynomial obtain where we compute the fermionant on the matrix $(x_{i,j})_{1\leq i,j\leq n}$. If we write $\Ferm[k]$ the p-family $(\Ferm[k]_n)_{n \in \mathbb N}$, we have a complete classification of the algebraic complexity of those polynomials.

\begin{thm}\label{ThmEvalution}
Let $k$ be a rational. 
\begin{itemize}
	\item $\Ferm[0] = 0$.
	\item $\Ferm[1]$ is in $\VP$
	\item for other values of $k$ $\Ferm[k]$ is $\VNP$-complete for c-reductions.
\end{itemize}
\end{thm}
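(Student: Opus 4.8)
The first two cases are immediate. For $k=0$, every $\pi \in S_n$ has at least one cycle, so $(-k)^{c(\pi)} = 0^{c(\pi)} = 0$ and the whole sum vanishes, giving $\Ferm[0] = 0$. For $k=1$, I would invoke the classical identity $\mathrm{sgn}(\pi) = (-1)^{n-c(\pi)}$, which yields $(-1)^{c(\pi)} = (-1)^n \mathrm{sgn}(\pi)$; substituting this into the definition collapses the cycle weight into the sign and gives $\Ferm[1](A) = \det(A)$, so $\Ferm[1]$ is the determinant family and lies in $\VP$.

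It remains to treat a fixed rational $k \notin \{0,1\}$, where two things must be shown: that $\Ferm[k] \in \VNP$ and that it is $\VNP$-hard. Membership I would obtain from Valiant's criterion, the standard sufficient condition for $\VNP$ membership. The monomials appearing in $\Ferm[k]_n$ are exactly the products $\prod_i x_{i,\pi(i)}$ for $\pi \in S_n$; these are pairwise distinct and multilinear, and the coefficient of such a monomial is $(-1)^n(-k)^{c(\pi)}$. Since $k$ is a fixed rational and $c(\pi)$ can be read off from the monomial in polynomial time, this coefficient function is polynomial-time computable with polynomially bounded bit-size, so Valiant's criterion places $\Ferm[k]$ in $\VNP$.

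For hardness I would c-reduce the permanent, which is $\VNP$-complete, to $\Ferm[k]$. Reading $A$ as the weighted adjacency matrix of a complete digraph, both $\per(A)$ and $\Ferm[k](A)$ are sums over cycle covers (permutations), the difference being that $\Ferm[k]$ weights a cover $\pi$ by the extra factor $(-1)^n(-k)^{c(\pi)}$. Grouping by the number of cycles, write $\per(A) = \sum_{c \ge 1} N_c(A)$ with $N_c(A) = \sum_{\pi:\, c(\pi)=c} \prod_i A_{i,\pi(i)}$; a single evaluation of $\Ferm[k]$ then returns only the combination $(-1)^n\sum_c N_c(A)(-k)^c$. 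The plan is to build from $A$ a polynomial-size family of matrices $B_1,\dots,B_m$, obtained by attaching to $A$ a gadget carrying an auxiliary parameter, so that the oracle values $\Ferm[k](B_j)$ are the evaluations at $m$ distinct points of a univariate polynomial whose coefficients are exactly the graded pieces $N_c(A)$. Polynomial interpolation, which is permitted inside a c-reduction, then recovers every $N_c(A)$, and summing them yields $\per(A)$; the factor $-k$ multiplying the relevant contribution is invertible precisely because $k \neq 0$, while the case $k=1$ is excluded because there the graded information has already collapsed into the determinant.

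The crux, and the step I expect to be the main obstacle, is the gadget: a single oracle value pins the per-cycle weight at $-k$, so to interpolate I must genuinely vary the weight each cycle of $A$ contributes, and the number of cycles is a global feature of a permutation that resists purely local (per-edge or per-vertex) gadgets, which only rescale a cover by a factor depending on its length or on $n$. I would therefore design a device that couples to each cycle exactly once, so that summing over its internal states multiplies each cycle's weight by a tunable factor and turns $\Ferm[k](B_j)$ into $(-1)^{N}\sum_c N_c(A)\,\mu_j^{\,c}$ for controllable, pairwise distinct $\mu_j$. Verifying that such a gadget can be realized while keeping the construction of polynomial size, and that the resulting Vandermonde system is invertible, is where the real work lies.
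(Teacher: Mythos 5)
Your handling of $k=0$, of $k=1$ (via $\mathrm{sgn}(\pi)=(-1)^{n-c(\pi)}$), and of $\VNP$-membership (Valiant's criterion) is correct --- the paper does not even bother to spell membership out. Your hardness skeleton is also structurally the same as the paper's: view $\Ferm[k](A)$ as the evaluation at one point of the cycle-count generating polynomial $\sum_c N_c(A)\,x^c$, manufacture polynomially many matrices whose fermionants are evaluations of that polynomial at distinct points, and invert the resulting Vandermonde system. (The only cosmetic difference is the completion: you would sum the recovered graded pieces to get $\per(A)=\sum_c N_c(A)$, while the paper keeps only the piece $N_1=\Ham_n$, the Hamiltonian, which is $\VNP$-complete by \cite{Bur00}.)

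However, there is a genuine gap, and it is exactly the one you flag yourself: the gadget that makes the per-cycle weight tunable is never constructed, and without it there is no reduction at all. Moreover, the device you ask for --- one that ``couples to each cycle exactly once'' --- is precisely what local constructions cannot achieve, for the reason you give: the number of cycles is a global statistic of a cover. The paper sidesteps this difficulty rather than solving it head-on: to realize the evaluation point indexed by $l$, it takes $l$ disjoint copies $G_1,\dots,G_l$ of the graph (copies $2,\dots,l$ with all edges weighted $1$) and ties each edge $e_i$ to its copy $e_{i+1}$ with an ``iff-gadget'' that annihilates every cover containing exactly one of the two tied edges. The surviving covers of the enlarged graph are exactly the $l$-fold replications of covers $\pi$ of $G$, each contributing $(-k)^{l\,c(\pi)}\omega(\pi)$ up to the known constant $\left(\frac12(1-k)\right)^{(|E(G)|-n)(l-1)}$; so it is the \emph{exponent}, not a local weight, that varies, and the interpolation points are $\mu_l=(-k)^l$. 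Two further points your sketch would still have to absorb: (i) Burgisser's iff-gadget for the permanent cannot be reused, since it alters cycle counts; the paper must design a fermionant-specific gadget (with entries involving $1/k$) and verify case by case that the unwanted covers cancel while the wanted ones pick up controlled factors; and (ii) the points $(-k)^l$, $l=1,\dots,n$, are pairwise distinct only when $k\neq\pm1$, so $k=-1$ needs a separate (trivial) argument --- $\Ferm[-1]$ is $(-1)^n\per$ --- a case your invertibility remark (``invertible precisely because $k\neq0$'') overlooks.
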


Similarly to the permanent we can see the fermionant as a computation on a graph $G$ with $n$ vertices and the edge between the vertices $i$ and $j$ is labeled with the variable $x_{i,j}$. A permutation $\pi \in S_n$ can be seen as a cycle cover on this graph. A \begin{it}cycle cover\end{it} of $G$ is a subset of its edges that covers all vertices of $G$ and that form cycles. The weight of a cycle cover $\pi$ is $\omega(\pi) = \prod_{e \in \pi}x_e$ and we write $c(\pi)$ its number of cycles, then

\[\Ferm[k](\bar x) = \sum_{\pi \in CC(G)}(-k)^{c(\pi)}\prod_{e\in \pi}x_e\]
where $CC(G)$ is the set of all cycle covers of $G$. We will use a so call iff-gadget, which is the labeled graph draw above. The idea of this gadget is when placed between two edges $e$ and $e'$ on $G$, any cycle cover containing exactly one of the edges $e$ and $e'$ will not contribute to the fermionant computed on the resulting graph.

\begin{center}
\begin{center}\begin{bf}Fig. 1 Burgisser's iff-gadget\end{bf}\end{center}
\includegraphics{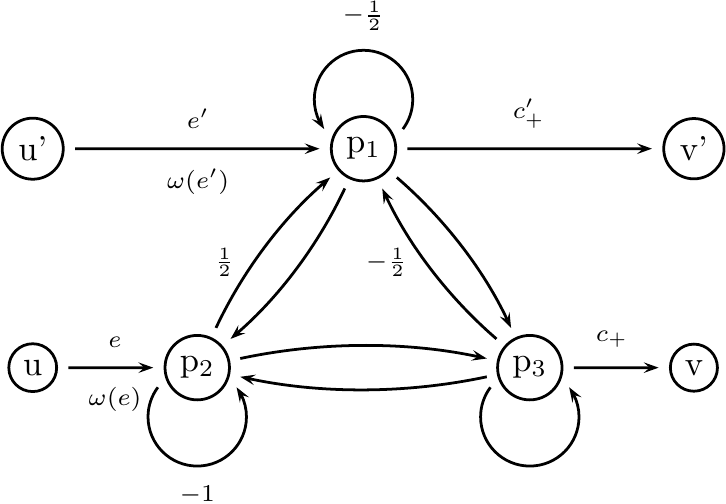} 

\end{center}
\begin{lem}\label{add_several_simultaneously_iff}
Let $G$ be a graph with $n$ vertices and $(e^i_1,e^i_2)_{1\leq i\leq l}$ be a set of pairs of edges of $G$ such that no two edges in this set are equal. Let $G'$ be the same graph but where we place an iff-gadget between every pair $(e^i_1,e^i_2)$. Let $\pi$ be a cycle cover of $G$, $\Pi(\pi)$ be the set of cycle covers of $G'$ that match $\pi$ on $E(G)$.
\begin{itemize}
\item If there is a pair $(e^i_1,e^i_2)$ of edges such that $e^i_1 \in \pi$ and $e^i_2 \notin \pi$, or vice versa, then
$$ \sum_{\pi' \in \Pi(\pi)}(-k)^{c(\pi')}\omega(\pi') =0$$

\item Else, let $d(\pi)$ be the number of pair  $(e^i_1,e^i_2)$ of edges such that $e^i_1\notin \pi$ and $e^i_2 \notin \pi$. Then
$$ \sum_{\pi' \in \Pi(\pi)}(-k)^{c(\pi')}\omega(\pi') = \left(\frac 12(1-k)\right)^{d(\pi)}(-k)^{c(\pi)}\omega(\pi)$$
\end{itemize}
\end{lem}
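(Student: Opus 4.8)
The plan is to reduce the statement to a purely local analysis of a single iff-gadget and then multiply the local contributions together. The crucial hypothesis is that no edge of $G$ belongs to two of the pairs $(e^i_1,e^i_2)$: this guarantees that the internal vertices and edges introduced by distinct gadgets are disjoint, so that a cycle cover $\pi'\in\Pi(\pi)$ is obtained by an \emph{independent} choice of how to route through each of the $l$ gadgets. First I would make this independence precise. Writing $\pi'$ as the disjoint union of its restriction to $E(G)$ (which equals $\pi$) and the edges it uses inside each gadget, I would check that $\omega(\pi')$ factors as $\omega(\pi)$ times a product of local edge-weight contributions, and---more delicately---that $c(\pi')-c(\pi)$ is a sum of local cycle-count changes, one per gadget. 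Granting this, the sum $\sum_{\pi'\in\Pi(\pi)}(-k)^{c(\pi')}\omega(\pi')$ factors as $(-k)^{c(\pi)}\omega(\pi)$ times a product over $i$ of local sums, and it then suffices to evaluate a single such local sum.

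Second, I would carry out the single-gadget computation directly from the gadget of Fig.~1. Fixing a gadget placed between $e=e^i_1$ and $e'=e^i_2$, I would enumerate all ways a cycle cover can traverse the gadget's internal vertices compatibly with which of $e,e'$ are already used by $\pi$, recording for each local configuration its weight (the product of the internal edge labels) and the power of $(-k)$ coming from the cycles it creates or merges. The labeling of the gadget should be engineered to produce exactly three outcomes: when exactly one of $e,e'$ is present the compatible local configurations cancel, so the local sum is $0$; when both are present the local sum is $1$ (the gadget acts as a weightless pass-through that changes no cycle count); and when neither is present the local sum is $\tfrac12(1-k)$. The constant $\tfrac12(1-k)$ is precisely where the parameter $k$ enters; it should emerge as a weighted sum over the few internal cycle patterns that are available when the gadget must be covered entirely by its own edges.

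Finally, I would assemble the two bullets from the product of local factors. If some pair has exactly one of its edges in $\pi$, the corresponding local factor is $0$, and since the whole sum is a product of local factors, it vanishes, giving the first bullet. Otherwise every pair is either ``both in $\pi$'' (local factor $1$) or ``neither in $\pi$'' (local factor $\tfrac12(1-k)$); as there are exactly $d(\pi)$ pairs of the latter kind, the product of local factors is $\bigl(\tfrac12(1-k)\bigr)^{d(\pi)}$, and multiplying by the base term $(-k)^{c(\pi)}\omega(\pi)$ yields the second bullet.

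I expect the main obstacle to be the cycle-count bookkeeping in the first step: inserting a gadget replaces an edge by a longer path and can create small internal cycles, so I must verify that the change $c(\pi')-c(\pi)$ genuinely localizes to individual gadgets, with no global merging of cycles across several gadgets threaded along one cycle of $\pi$. Once this localization is established, the disjointness of the pairs makes both the factorization and the final product clean; the residual work---confirming the three local values, and in particular the cancellation that forces the ``exactly one present'' case to vanish---is a finite case check against the explicit gadget of Fig.~1.
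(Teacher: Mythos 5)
Your proposal is correct and follows essentially the same strategy as the paper: the appendix first establishes exactly your three local values (both edges present gives factor $1$, exactly one present gives $0$, neither present gives $\frac{1}{2}(1-k)$) as a single-gadget lemma (Lemma~\ref{add_one_iff}), and then combines the disjoint gadgets multiplicatively. The only difference is one of formalization: where you argue by independence and a product of local factors, the paper runs an induction on $l$, peeling off one gadget at a time and applying the single-gadget lemma to the host graph that already contains the remaining gadgets --- which is precisely how it discharges the cycle-count localization worry you flag, since that lemma is stated for an arbitrary host graph and arbitrary cycle cover. One caveat for your finite case check: it must be carried out on the paper's $k$-dependent gadget (Fig.~2 of the appendix), not Burgisser's gadget of Fig.~1, which is the special case $k=-1$; with the fixed labels of Fig.~1 the cancellation in the ``exactly one present'' case fails for general $k$.
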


The proof is in the annexes~\ref{add_several_simultaneously_iff_proof}. Now here is the main tool of our demonstration, that allows us to interpolate the fermionant and compute the permanent.

\begin{lem}\label{copies_and_iff_gadget}
Let $G$ be a graph with $n$ vertices. We make $l$ copies of $G$ and name them $G_1, \hdots, G_l$. Let $\tilde F^l$ be the disjoint union of those copies in which we label the edges of $G_1$ with the same weight as those of $G$ and the edges of $G_i$ for $i \geq 2$ with $1$. If $e$ is an edge of $G$, we call $e_i$ the corresponding edge in $G_i$. We name $F^l$ the graph $\tilde F^l$ where for any edge $e \in E(G)$ and any $1 \leq i \leq l$, we have placed an iff-gadget between $e_i$ and $e_{i+1}$. Let $\pi$ be a cycle cover of $G$ and $\Pi(\pi)$ be the set of cycle covers of $F^l$ that match $\pi$ on $E(G_1)$. Then

$$ \sum_{\pi' \in \Pi(\pi)} (-k)^{c(\pi')}\omega(\pi')  = \left(\frac 12 (1-k)\right)^{(|E(G)|-n)(l-1)} \left(-k\right)^{l\times c(\pi)}\omega\left(\pi\right) $$
\end{lem}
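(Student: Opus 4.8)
The plan is to prove the statement by induction on the number of copies $l$, peeling off one copy at a time from the top. The observation that makes Lemma~\ref{add_several_simultaneously_iff} usable is that, although the full family of iff-gadgets in $F^l$ shares edges (each edge $e_i$ sits in both pairs $(e_{i-1},e_i)$ and $(e_i,e_{i+1})$, violating the distinctness hypothesis of that lemma), the gadgets of a \emph{single} layer — those inserted between copy $l-1$ and copy $l$, i.e. the pairs $(e_{l-1},e_l)$ for $e\in E(G)$ — involve pairwise distinct edges. Hence Lemma~\ref{add_several_simultaneously_iff} applies verbatim to this last layer alone, and the whole difficulty is reduced to propagating the effect of the gadgets through the chain $e_1,\dots,e_l$ one layer at a time.

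To carry this out cleanly I would prove a slightly stronger statement $R(l)$ by induction: for any two cycle covers $\pi,\rho$ of $G$, the sum of $(-k)^{c(\sigma)}\omega(\sigma)$ over all cycle covers $\sigma$ of $F^l$ that match $\pi$ on $E(G_1)$ \emph{and} match $\rho$ on the original edges of the top copy $G_l$ equals the claimed right-hand side when $\rho=\pi$, and vanishes when $\rho\neq\pi$. Summing $R(l)$ over all cycle covers $\rho$ then yields the lemma, since every cycle cover of $F^l$ matching $\pi$ on $E(G_1)$ restricts, on the original edges of $G_l$, to some cycle cover of $G$. The base case $l=1$ is immediate: $F^1=G$ forces $\sigma=\pi$, hence $\rho=\pi$, and the exponent $(|E(G)|-n)(l-1)$ vanishes.

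For the inductive step I would split $F^l$ as the disjoint union of $F^{l-1}$ (copies $1,\dots,l-1$ with their layers) and the fresh weight-one copy $G_l$, joined by the last layer of iff-gadgets. Writing a cycle cover $\sigma$ of $F^l$ as a cycle cover $\tau$ of $F^{l-1}$, a cycle cover $\mu$ of $G_l$, and the gadget-internal choices, I would apply Lemma~\ref{add_several_simultaneously_iff} to this last layer. Its edges being distinct, the lemma forces $e_{l-1}\in\tau \iff e_l\in\mu$ for every $e$ and contributes a factor $\left(\tfrac12(1-k)\right)^{|E(G)|-n}$: fixing the restriction of $\mu$ to be the cycle cover $\rho$, which uses exactly $n$ edges, makes the number of pairs with both edges absent equal to $|E(G)|-n$. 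The iff-constraint then forces $\tau$ to restrict to $\rho$ on $G_{l-1}$ as well; because $G_l$ carries weight one and the copies are disjoint, after summing out the gadget interiors one is left with $\omega(\mu)=1$, $\omega(\sigma)=\omega(\tau)$ and $c(\sigma)=c(\tau)+c(\rho)$. This reduces the sum to $R(l-1)$ at $(\pi,\rho)$ multiplied by $\left(\tfrac12(1-k)\right)^{|E(G)|-n}(-k)^{c(\rho)}$, and the induction hypothesis closes both cases at once.

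The main obstacle is the bookkeeping hidden in the phrase ``match $\rho$ on $E(G_l)$'': one must be certain that the restriction of any cycle cover of $F^l$ to the original edges of a single copy is itself a genuine cycle cover of $G$, so that $\mu$ is determined by $\rho$ and the cycle count splits as $c(\sigma)=c(\tau)+c(\rho)$, and that the last-layer gadgets contribute nothing beyond the factor isolated in Lemma~\ref{add_several_simultaneously_iff}. This is precisely why I introduce the refined invariant $R(l)$ rather than attempting to apply Lemma~\ref{add_several_simultaneously_iff} to all gadgets simultaneously: the chain of shared edges would otherwise break the distinctness hypothesis, and the strengthened induction is exactly what lets the legitimate single-layer application cascade correctly through all $l$ copies.
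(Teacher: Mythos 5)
Your proof is correct, and its skeleton is the same as the paper's: induction on $l$, splitting $F^l$ into the disjoint union $\hat F^l$ of $F^{l-1}$ and the fresh weight-one copy $G_l$, and applying Lemma~\ref{add_several_simultaneously_iff} only to the last layer of gadgets, whose pairs $(e_{l-1},e_l)$ do consist of pairwise distinct edges. The genuine difference is your strengthened invariant $R(l)$, which additionally pins down the restriction to the top copy, and this refinement is not cosmetic: it repairs a real gap in the paper's own argument. The paper inducts on the plain statement and asserts that a contributing cover $\pi''=(\sigma,\lambda)$ of $\hat F^l$ ``must have $\lambda$ equal to the copy of $\pi$''; but the last layer of gadgets only forces $\lambda$ to copy the restriction of $\sigma$ to $G_{l-1}$, and the plain induction hypothesis says nothing about that restriction. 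Covers $\sigma\in\Pi^{l-1}(\pi)$ restricting on $G_{l-1}$ to some cycle cover $\rho\neq\pi$ genuinely exist --- the gadgets cancel them only in aggregate, never individually --- and each such $\sigma$ contributes with the $\rho$-dependent factor $(-k)^{c(\rho)}$, so the paper's step of pulling $(-k)^{c(\pi)}$ out of the sum over all of $\Pi^{l-1}(\pi)$ tacitly needs the identity $\sum_{\sigma:\,\sigma|_{G_{l-1}}=\rho}(-k)^{c(\sigma)}\omega(\sigma)=0$ for every $\rho\neq\pi$, which is exactly the vanishing half of your $R(l-1)$. So your stronger induction is precisely what makes the computation legitimate; in this respect your write-up is more rigorous than the original. (An alternative repair: peel off the \emph{first} copy instead of the last; the matching constraint then propagates to ``match the copy of $\pi$ on the first of the remaining $l-1$ weight-one copies'', which is the plain induction hypothesis applied to the weight-one version of $G$, so no strengthening is needed.) The bookkeeping caveat you flag --- that restrictions to the original edges of one copy are honest cycle covers, and that covers of $F^l$ decompose into a cover of $\hat F^l$ plus gadget-interior choices --- is assumed at the same level of informality in the paper's own decomposition $\Pi(\pi)=\bigcup_{\pi''}\Lambda(\pi'')$, so you lose nothing there.
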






\begin{proof}
The idea is, with the help of the iff-gadget, to copy a cycle cover from $G_1$ to every other copies of $G$, without changing the weight of this cycle cover, just multiplying the number of cycles. The demonstration is by induction on $l$.

If $l=2$, then we simultaneously add $|E(G)|$ iff-gadgets, but only one on each edge. By design, a cycle cover $\pi$ on $G_1$ is repeated on $G_2$ (i.e., if $e_1$ is in $\pi$ then $e_2$ is also in $\pi$ as there is a iff-gadget between $e_1$ and $e_2$. see Lemma~\ref{add_several_simultaneously_iff}). The edges of $G_2$ are labeled with $1$ and therefore do not contribute to the weight of the cycle cover. The number of cycles of $\pi' \in \Pi(\pi)$ is  twice the number of cycles of $\pi$. There is $|E(G)|$ iff-gadgets in $F^2$. A cycle cover of $G$ passes through $n$ edges and therefore activates exactly $n$ iff-gadgets. The other iff-gadgets are not activated and thus each contribute $\frac 12 (1-k)$ to the sum.

Suppose the lemma true for $l-1$ copies. Let $F^{l-1}$ be the disjoint union of $l-1$ copies of $G$ with iff-gadgets. We add a new copy $G_l$ of $G$ linked to $F^{l-1}$ with iff-gadgets to obtain $F^l$. Let $\pi$ be a cycle cover of $G$, $\Pi^l(\pi)$ the set of every cycle covers of $F^l$ that match $\pi$ on $E(G_1)$ and $\Pi^{l-1}(\pi)$ the same but on $F^{l-1}$. By induction,

$$ \sum_{\pi' \in \Pi^{l-1}(\pi)} (-k)^{c(\pi')}\omega(\pi')  = \left(\frac12 (1-k)\right)^{(|E(g)|-n)(l-2)} (-k)^{(l-1)\times c(\pi)}\omega(\pi) $$

Let $\hat F^l $ be the disjoint union of $F^{l-1}$ and $G_l$. To obtain $F^l$ from this graph, one has just to add a iff-gadget between every edge $e_{l-1}$ and $e_l$. We can apply then Lemma~\ref{add_several_simultaneously_iff} to this graph. If $\pi''$ is a cycle cover of $\hat F^l$ that match $\pi$ on $G_1$, let $\Lambda(\pi'')$ be the set of cycle covers of $F^{l}$ that match $\pi''$ on $E(\hat F^{l})$.  Then, if we call $d(\pi'')$ the number of pairs $(e_{l-1},e_l)$ that are not in $\pi''$,
 
 \[ \sum_{\lambda \in \Lambda(\pi'')} (-k)^{c(\lambda)}\omega(\lambda) = \left(\frac 12(1-k)\right)^{d(\pi'')} (-k)^{c(\pi'')}\omega(\pi'')\]
 
 Let us study a little bit more $\pi''$. It is a cycle cover of two disjoint graphs, $F^{l-1}$ and $G_l$. Therefore it is composed of two sub cycle covers: $\sigma'$ a cycle cover of $F^{l-1}$ which by induction is in a $\Pi^{l-1}(\pi)$ and a cycle cover $\lambda$ of $G_l$. However, as every edge of $G_l$ is linked with an iff-gadget to its image in $G_{l-1}$ in $F^l$, the cycle cover $\pi''$ will contribute to the last sum if and only if it contain both $e_{l-1}$ and $e_l$, or neither $e_{l-1}$ and $e_l$. Thus, $\lambda$ must be the copy of $\pi$ in $G_l$, which we write $\lambda_{\pi}$ and $c(\pi'') = c(\sigma)+c(\lambda_{\pi}) = c(\sigma) + c(\pi) $.

There are $n$ edges in the last image $G_l$ that are passed through by $\pi''$. Therefore, there are $(|E(G)|-n)$ iff-gadgets that are not activated by $\pi''$ (i.e., $d(\pi'') = |E(G)|-n$). Thus,
 \begin{align*}
 \sum_{\pi' \in \Pi(\pi)} &(-k)^{c(\pi')}\omega(\pi') = \sum_{\pi'' \in \Pi^l(\pi)}\sum_{\lambda \in \Lambda(\pi'')} (-k)^{c(\lambda)}\omega(\lambda)\\
 &= \sum_{\pi'' \in \Pi^l(\pi)}\left(\frac 12(1-k)\right)^{|E(G)|-n} (-k)^{c(\pi'')}\omega(\pi'')\\
  &=\left(\frac 12(1-k)\right)^{|E(G)|-n}  (-k)^{c(\lambda_{\pi})}\sum_{\sigma \in \Pi^{l-1}(\pi)}(-k)^{c(\sigma)}\omega(\sigma)\\
  &=\left(\frac 12(1-k)\right)^{(|E(G)-n)\times (l-1)} (-k)^{l\times c(\pi)}\omega(\pi)\\
 \end{align*}
 
 Where $\Pi(\pi)$ is the set of cycle covers of $F^l$ that match $\pi$ on $E(G)$; $\Pi^l(\pi)$ the set of cycle covers of $\tilde F^l$ that match $\pi$ on $E(G)$ and for $\pi'' \in \Pi^l(\pi)$, $\Lambda(\pi'')$ the set of cycle covers that match $\pi''$ on $E(\tilde F^l)$. We have $\Pi(\pi) = \bigcup_{\pi'' \in \Pi^l(\pi)}\Lambda(\pi'')$ which completes our demonstration.

\end{proof}

\begin{proof}[Proof of theorem~\ref{ThmEvalution}]
The first case is trivial. For the second, it is a well known result, as $\Ferm[1]_n(\bar x) = \det_n(\bar x)$. Now, let $k$ be a rational different than $ 0$ and $1$. 

Let us write $(P_l G)$ the graph obtained in the previous lemma, when we duplicate $l$ times $G$ and add iff-gadgets to repeat every cycle cover $l$ times. We have seen that

\[\Ferm[k]_{ln}(P_l G)(\bar x) = \sum_{\pi \in \CC(G)} (-k)^{l \times c(\pi)} \prod_{e\in \pi}\omega(e) \times \left(\frac12 \left(1-k\right)\right)^{(l-1)\times(|E(G)| - n)} \]

Let us write $c_m = \sum_{\pi \in \CC(G) | c(\pi) = m} \prod_{e \in \pi} \omega(e)$, $\alpha =  \left( \frac12\left(1-k\right)\right)^{|E(G)|-n}$, $f_l = \Ferm[k]_{ln}(P_{l}(G))$ and $\omega_l = (-k)^{l}$, then
\[
\begin{pmatrix}
   	f_1 \\
	f_2 \\
   	\vdots \\
	f_{n} \\ 
\end{pmatrix} = 
\begin{pmatrix}
\alpha &0 & \hdots & 0 \\
0 & \alpha^2 & \hdots & 0 \\
\vdots & \vdots & \ddots & \vdots \\
0 & 0 & \hdots & \alpha^n\\
\end{pmatrix}
\begin{pmatrix}
   	 \omega_1 & \omega_1^2 & \hdots &  \omega_1^n \\
	\omega_2 & \omega_2^2 & \hdots &\omega_2^n \\
	\vdots & \ddots && \vdots \\
	 \omega_n & \omega_n^2 & \hdots &\omega_n^n \\
\end{pmatrix}
\begin{pmatrix}
   	c_1 \\
	c_2 \\
   	\vdots \\
	c_{n} \\ 
\end{pmatrix}
\]
This system of equation is a Vandermonde system and therefore is invertible (if $k\neq 1$ and $k\neq -1$, because in these cases, some $\omega_i$ are equal and the matrix is not invertible): there exists some rationals $\omega_{l,m}^*$ such that for any $m$, $c_m = \sum_{l=1}^n \omega_{l,m}^* f_l(\bar x) $.

Therefore, for any $m$, we have a c-reduction from $c_m$ to the fermionant, $(c_m) \leq_c (\Ferm[k])$. But, $c_1 := \sum_{\pi \in S_n | c(\pi) = 1} \prod_{i+1}^n x_{i,\pi(i)} = \Ham_n(\bar x)$, where $\Ham_n$ is the Hamiltonian, which is known to be $\VNP$-complete (\cite{Bur00}, Corollary $3.19$). 
\end{proof}

The fermionant can be expressed as a linear combination of polynomial size of the Hamiltonian. From that we have concluded that the fermionant is $\VNP$-complete. However, the Hamiltonian is also $\#P$-complete, when considered as a counting problem. This gives us a Turing reduction from the Hamiltonian to the fermionant and thus it is also $\#P$-complete, but only when computed on rational matrix; the Turing reductions requires rationals ($\frac 12$, $-\frac 1k$, etc). We can adapt the proof of Valiant for the $\#P$-completeness of the permanent to replace those rationals by some gadgets only using $0$ and $1$. And thus we have the following non trivial corollary. The proof is in the annexes~\ref{counting_fermionant_proof}.

\begin{coro}\label{counting_fermionant}
For every $k \neq 1$ and $k \neq 0$, $\Ferm[k]$ is $\#P$-complete for matrices over $\{0,1\}$.
\end{coro}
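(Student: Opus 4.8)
The statement has two halves, membership in (the closure of) $\#P$ and $\#P$-hardness; I would treat $k$ as an integer with $k\neq 0,1$, handling a non-integer rational $k=p/q$ by computing the integer $q^{\,n}\Ferm[k]$ and choosing the modulus below coprime to $q$ as well. Membership is immediate: on a $\{0,1\}$ matrix one has $\Ferm[k](A)=(-1)^n\sum_{m}(-k)^m N_m$, where $N_m$ counts the cycle covers of the associated graph having exactly $m$ cycles; each $N_m$ is a $\#P$ function and the coefficients $(-1)^n(-k)^m$ are fixed small integers, so the fermionant is a polynomially sized integer combination of $\#P$ functions and lies in $\mathrm{GapP}\subseteq\mathrm{FP}^{\#P}$. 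For hardness I would reduce from counting directed Hamiltonian cycles, i.e.\ from evaluating $\Ham_n$ on a $\{0,1\}$ graph $G$, which is $\#P$-complete, reusing verbatim the reduction behind Theorem~\ref{ThmEvalution}: form the graphs $P_l G$ of Lemma~\ref{copies_and_iff_gadget} for $l=1,\dots,n$, read $f_l=\Ferm[k]_{ln}(P_l G)$ off the oracle, and recover $c_1=\Ham_n(G)$ from the Vandermonde relation $c_1=\sum_l \omega^*_{l,1} f_l$. The inversion of the Vandermonde system and the final combination are polynomial-time rational arithmetic on the oracle outputs, which a Turing reduction is free to perform. The sole obstruction is therefore that $P_l G$ is not a $\{0,1\}$ graph: the iff-gadgets carry edge labels that are fixed rational functions of $k$ (this is exactly what produces the factor $\tfrac12(1-k)$ per inactive gadget in Lemma~\ref{add_several_simultaneously_iff}).

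To eliminate the rationals I would imitate Valiant's proof of $\#P$-completeness of the $0/1$ permanent (cf.~\cite{Bur00}) and carry out the whole reduction modulo a large prime $N$. Choose $N$ coprime to the denominators that appear, namely powers of $2$, of $k$, and the factors $(-k)^i-(-k)^j$ of the Vandermonde determinant, and large enough that $N>2\,(n-1)!$; such a prime with $\log N=\mathrm{poly}(n)$ exists. Modulo $N$ the inverses of $2$ and of $k$ exist, so every edge label of an iff-gadget becomes a well-defined residue $\tilde w\in\{0,\dots,N-1\}$, and likewise the rational coefficients $\omega^*_{l,1}$ have well-defined images in $\mathbb Z/N$. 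It then remains to realize over $\{0,1\}$ an edge of prescribed integer weight $\tilde w<N$: encoding $\tilde w$ in binary yields a sub-gadget of size $O(\log N)=\mathrm{poly}(n)$ whose internal cycle covers contribute exactly $\tilde w$. Replacing every rational edge of $P_l G$ by its gadget produces a genuine $\{0,1\}$ matrix $\widehat P_l G$ with $\Ferm[k](\widehat P_l G)\equiv \Ferm[k]_{ln}(P_l G)\pmod N$. Writing $F_l=\Ferm[k](\widehat P_l G)$ for the integer returned by the oracle, we then have $\sum_l \omega^*_{l,1}F_l\equiv c_1=\Ham_n(G)\pmod N$ in $\mathbb Z/N$, and since $0\le \Ham_n(G)<(n-1)!<N/2$ the value $\Ham_n(G)$ is read off by lifting the residue to the symmetric range. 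This is a genuine Turing reduction of $\#\mathrm{Ham}$ to the fermionant over $\{0,1\}$, which together with membership gives $\#P$-completeness.

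I expect the gadget construction to be the crux, and precisely the point at which Valiant's permanent gadgets must be reworked rather than quoted. The difficulty is that the fermionant weights a cycle cover by $(-k)^{c(\pi)}$ rather than by a bare product of edge weights as the permanent does: any cycle a sub-gadget closes \emph{internally} raises $c(\pi)$ and hence multiplies the term by a spurious $-k$. The gadget must therefore be designed so that either it is always traversed as part of an already existing cycle, creating no new cycle, or the extra $-k$ factors are folded into the target residue $\tilde w$ modulo $N$. Checking that this bookkeeping is exact, i.e.\ that $\Ferm[k](\widehat P_l G)\equiv\Ferm[k]_{ln}(P_l G)\pmod N$ with no residual cycle-count discrepancy, is the delicate step; everything else is the modular arithmetic and Vandermonde interpolation already licensed by a Turing reduction.
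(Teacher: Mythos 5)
Your overall strategy is the same as the paper's: reuse the Vandermonde interpolation from Theorem~\ref{ThmEvalution} to reduce $\Ham_n$ to fermionant evaluations of the graphs $P_l G$ of Lemma~\ref{copies_and_iff_gadget}, push the whole computation into modular arithmetic so that the rational edge labels and the rational interpolation coefficients become integers, and finally replace each integer-weighted edge by a $\{0,1\}$ gadget built from its binary expansion. (The paper clears denominators slightly differently --- it multiplies every edge weight by $2k$, replaces each negative entry $-m$ by $(\Lambda+1)m$ modulo a large integer $\Lambda$, and multiplies the interpolation identity through by $\omega$ and powers of $2k$, rather than inverting $2$ and $k$ modulo a prime --- but these are interchangeable bookkeeping choices.)

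The genuine gap is the step you yourself flag as ``the delicate step'': you assert that encoding $\tilde w$ in binary yields a $\{0,1\}$ sub-gadget whose internal cycle covers contribute exactly $\tilde w$, but you never construct it, and for the fermionant this assertion is exactly the non-trivial content of the corollary. Valiant's permanent gadget cannot be quoted here for two reasons: first, inside a binary gadget a cycle cover may close different numbers of internal cycles depending on which power-of-two branch it takes; second, when the simulated edge is \emph{not} used, the gadget's auxiliary vertices must still be covered by some collection of cycles, whose number a priori differs from the used case. Each such internal cycle contributes a spurious factor $-k$, so without further work the substitution multiplies different cycle covers of $P_l G$ by \emph{different} powers of $-k$, and the claimed congruence $\Ferm[k](\widehat P_l G)\equiv\Ferm[k](P_l G)\pmod N$ is simply false. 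The paper's proof spends essentially all of its effort on precisely this point: it introduces a diamond gadget (to simulate doubled edges over $\{0,1\}$) and attaches a loop gadget to every auxiliary vertex, engineered so that \emph{every} cycle cover --- whether or not it uses the simulated edge, and whatever route it takes inside the gadget --- acquires exactly the same number $|a|+2(|a|+1)$ of new cycles. Only then is the net effect of all substitutions a single global factor $(-k)^{\gamma_l}$, independent of the input matrix $A$, which is compensated in the final congruence by multiplying through by $(-k)^{\gamma-\gamma_l}$ with $\gamma=\max_i\gamma_i$. Your fallback of ``folding the extra $-k$ factors into the target residue'' only makes sense once one has proved that the number of extra cycles per gadget is a constant, and your other option (the gadget is always traversed as part of an existing cycle) cannot handle the unused-edge case at all. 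So the proposal is a correct outline of the paper's route, but the crux is named rather than proved.
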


 \section{Immanant with constant length}
Immanants are defined with characters of representations of $S_n$. Such characters can be indexed by Young diagrams of $n$ boxes (i.e. collections of boxes arranged in left-adjusted rows with a decreasing row length). As all the work of representation theory has already be done (Lemma~\ref{relation_between_ferm_and_imm}), I will not define more those characters. We will only work on Young diagrams.

The \begin{it}immanant\end{it} associated with a Young diagram $Y$ (and its associate character $\chi_Y$) is
 \[\im_{\chi}(\bar x) = \sum_{\pi \in S_n}\chi_Y(\pi)\prod_{i=1}^nx_{i,\pi(i)}\]

For example, if the Young diagram is a single row of $n$ boxes, then for any $\sigma \in S_n$, $\chi_Y(\sigma) = 1$ and thus $\im_Y = \per$. At the opposite end, if $Y$ is a single column with $n$ boxes, $\chi_Y(\sigma) = sg(\sigma)$ and $\im_Y = \det$. For more details (and for a nice demonstration of the Murnaghan-Nakayama rule, one of the main parts of our demonstration), see~\cite{Gre92}.
 
A classical theorem states that the irreducible characters of the symmetric group form a basis for the class functions on $S_n$. Class functions are real functions defined on $S_n$ and stabled under conjugation (i.e. $\forall \pi, \sigma \in S_n, f(\pi \sigma \pi^{-1}) = f(\sigma)$). The function $\pi \longmapsto (-k)^{c(\pi)}$ is such a class function and thus is a linear combination of characters. Mertens and Moore~\cite{MM11} have computed those characters, and applied to the immanant we get:

\begin{lem}\label{relation_between_ferm_and_imm}
For any integers $k$ and $n$, if we write $\Lambda_k^n$ for the set of every Young diagram with $n$ boxes and at most $k$ columns, then there exists some constants $d_{Y}^k$ such that for any matrix $A$:
\[\Ferm[k]_n(A) = \sum_{Y \in \Lambda_k^n} d_{Y}^k\im_{Y}(A)\]
\end{lem}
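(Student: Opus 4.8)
The plan is to exploit the fact that the weight $\pi \mapsto (-k)^{c(\pi)}$ appearing in the fermionant is a \emph{class function} on $S_n$: since the number of cycles $c(\pi)$ depends only on the cycle type of $\pi$, it is invariant under conjugation. As the irreducible characters $\{\chi_Y : Y \vdash n\}$ form an orthonormal basis of the space of class functions, I can write $(-k)^{c(\pi)} = \sum_{Y \vdash n} a_Y\,\chi_Y(\pi)$ with coefficients $a_Y = \frac{1}{n!}\sum_{\pi \in S_n}(-k)^{c(\pi)}\chi_Y(\pi)$, which are rational because each $\chi_Y$ is integer-valued. Substituting this expansion into the definition of $\Ferm[k]_n$ and interchanging the two sums yields $\Ferm[k]_n(A) = (-1)^n \sum_{Y \vdash n} a_Y \im_Y(A)$, i.e.\ a decomposition of the claimed shape with $d_Y^k = (-1)^n a_Y$, but \emph{a priori} running over all partitions of $n$. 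The whole content of the lemma therefore reduces to a single vanishing statement: $a_Y = 0$ whenever $Y$ has strictly more than $k$ columns.

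To prove this vanishing I would establish the classical \emph{content product formula}
\[\sum_{\pi \in S_n} t^{c(\pi)}\chi_Y(\pi) = f^Y \prod_{b \in Y}\bigl(t + \mathrm{ct}(b)\bigr),\]
where $f^Y = \chi_Y(\mathrm{id})$ is the dimension and $\mathrm{ct}(b) = j - i$ is the content of the box $b$ in row $i$, column $j$ of $Y$. Granting this identity and substituting $t = -k$ gives $a_Y = \frac{f^Y}{n!}\prod_{b \in Y}(\mathrm{ct}(b) - k)$. If $Y$ has at least $k+1$ columns, its first row contains the box $b_0 = (1,k+1)$, whose content is $\mathrm{ct}(b_0) = (k+1)-1 = k$; the corresponding factor $\mathrm{ct}(b_0) - k$ vanishes, so the entire product is zero and $a_Y = 0$. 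This is exactly the desired statement, so the decomposition collapses onto $\Lambda_k^n$.

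The real work is the content product formula, and the cleanest route I would take uses the Jucys--Murphy elements $X_m = \sum_{j<m}(j,m) \in \mathbb{Q}[S_n]$ (with $X_1 = 0$). One first proves the factorization $\sum_{\pi \in S_n} t^{c(\pi)}\pi = \prod_{m=1}^n (t + X_m)$ in the group algebra by induction on $n$: writing the degree-$n$ sum as the degree-$(n-1)$ sum times $(t + X_n)$, the point $n$ either becomes a new fixed point (the $t$ term, raising the cycle count by one) or is spliced into an existing cycle via one of the transpositions $(j,n)$ (the $X_n$ term, keeping the cycle count fixed), and this correspondence is a bijection tracking the weight exactly. Since $t^{c(\pi)}$ is a class function the left-hand side is central, so on the irreducible module $V_Y$ it acts by the scalar $\frac{1}{f^Y}\sum_\pi t^{c(\pi)}\chi_Y(\pi)$. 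On the other hand the $X_m$ are simultaneously diagonalizable on $V_Y$ in Young's basis, with $X_m$ acting on the vector indexed by a standard tableau $T$ by the content of the box containing $m$ in $T$; hence $\prod_m (t+X_m)$ acts by $\prod_{b \in Y}(t + \mathrm{ct}(b))$, independently of $T$ because as $m$ runs over $1,\dots,n$ the occupied boxes exhaust $Y$. Comparing the two scalars yields the formula.

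I expect the main obstacle to be precisely this representation-theoretic input: the Jucys--Murphy factorization together with the content-eigenvalue statement. If one prefers to stay closer to the combinatorics advertised in the paper, an alternative is a direct induction via the Murnaghan--Nakayama rule, peeling off border strips and checking that the generating polynomial $\sum_\pi t^{c(\pi)}\chi_Y(\pi)$ obeys a recursion matching the product $f^Y\prod_b (t + \mathrm{ct}(b))$. Everything else---the class-function expansion, the interchange of sums, and the evaluation at $t = -k$---is routine. Since Mertens and Moore have already carried out the character computation, one may in fact simply quote their coefficients and verify only the vanishing at the box of content $k$.
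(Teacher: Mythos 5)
Your proof is correct, and it actually does more than the paper, which contains no real proof of this lemma: the text merely observes that $\pi\mapsto(-k)^{c(\pi)}$ is a class function, hence a linear combination of irreducible characters, and then cites Mertens and Moore for the computation of the coefficients, so the only nontrivial point---that the expansion is supported on diagrams with at most $k$ columns---is taken entirely on faith from that reference. You start from the same class-function expansion, but you supply the missing step yourself: you reduce it to the content product formula $\sum_{\pi\in S_n}t^{c(\pi)}\chi_Y(\pi)=f^Y\prod_{b\in Y}\bigl(t+\mathrm{ct}(b)\bigr)$, proved via the Jucys--Murphy factorization $\prod_{m=1}^n(t+X_m)=\sum_{\pi}t^{c(\pi)}\pi$ together with the content eigenvalues of the $X_m$ on Young's basis, and the support restriction then falls out from the box $(1,k+1)$ of content $k$. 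This buys a self-contained argument with explicit coefficients $d^k_Y=(-1)^n\frac{f^Y}{n!}\prod_{b\in Y}\bigl(\mathrm{ct}(b)-k\bigr)$, valid as a polynomial identity in $t$ and hence for every rational $k$; Mertens and Moore reach the same coefficients through symmetric-function specialization (power sums expanded in Schur functions evaluated at $1^k$, i.e.\ the hook-content formula), which is equivalent but external to the paper, whereas your derivation stays inside the group algebra and is checkable line by line. One caveat you should record: the hypothesis ``for any integers $k$'' in the statement is too generous---for $k<0$ the set $\Lambda_k^n$ is empty while, e.g., $\Ferm[-1]$ is $\pm\per$, so the lemma as written is false there---and your vanishing argument implicitly assumes $k\geq 0$, since it needs the box $(1,k+1)$ to exist. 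That is the only regime in which the statement has content, and the only one the paper uses (namely $k=2$), so nothing is lost, but the assumption should be made explicit.
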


Intuitively this suggests that the family of every immanants of bounded width is $\VNP$-complete. In algebraic complexity this is not that interesting, as this family is very large. But if we prove that with a certain family of immanant we can compute every immanants of width less than a certain $k$, then this family will be $\VNP$-complete. It is exactly what we are going to do for the demonstration of the following proposition.

\begin{prop}\label{Immanant_2_2}
Let $[n,n]$ be the square Young diagram with two columns, each with $n$ rows. Then $(\im_{[n,n]})_{n \in \mathbb n}$ is $\VNP$-complete for c-reductions.
\end{prop}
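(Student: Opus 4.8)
The plan is to show that the immanant $\im_{[n,n]}$ of the two-column square Young diagram is $\VNP$-hard by reducing a known $\VNP$-complete family to it, and then invoke closure of $\VNP$ under c-reductions for membership. The natural target is the fermionant $\Ferm[k]$ at some fixed rational $k \neq 0, 1$, which Theorem~\ref{ThmEvalution} tells us is $\VNP$-complete, together with Lemma~\ref{relation_between_ferm_and_imm}, which expresses $\Ferm[k]_n$ as a fixed linear combination $\sum_{Y \in \Lambda_k^n} d_Y^k \im_Y(A)$ of immanants of width at most $k$. Specializing to $k = 2$ seems the cleanest choice, since then $\Lambda_2^n$ consists only of Young diagrams with at most two columns, and the fermionant $\Ferm[2]$ becomes a linear combination of such immanants; if I can compute \emph{every} two-column immanant by c-reductions to the single family $(\im_{[n,n]})$, then I can assemble $\Ferm[2]$ and inherit its hardness.

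First I would set up the linear-combination machinery from the previous section in analogy with the fermionant proof: the two-column Young diagrams with $n$ boxes are parametrized by the length of the shorter column, so there are only polynomially many (in fact at most $n/2+1$) of them, and $\Ferm[2]_n$ is their weighted sum with the constants $d_Y^2$. The key reduction step is therefore to compute an arbitrary two-column immanant $\im_Y$ from the square immanants $\im_{[m,m]}$ for a range of sizes $m$. The natural tool here is the Murnaghan--Nakayama rule, which the author flags as central and which governs how a character $\chi_Y$ decomposes when one strips border strips (rim hooks) from $Y$. The idea I would pursue is to embed a general two-column diagram into a square one by padding: given a two-column diagram $Y$ with $n$ boxes, I would realize its immanant as a square immanant $\im_{[m,m]}$ evaluated on a matrix built from the original variables together with a gadget of constant entries that forces the ``extra'' rows and columns to behave trivially, so that the character values $\chi_{[m,m]}$ restrict, up to controllable scalars, to the character values $\chi_Y$ on the relevant permutations.

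The mechanism I would use to implement this padding is a block/direct-sum construction on matrices: place the $n \times n$ matrix of variables in one block and an identity-like or fixed block in the complementary positions, so that a cycle cover of the big matrix factors as a cycle cover of the variable block together with a forced portion on the padding. Under the immanant sum, only permutations respecting this block structure survive with nonzero weight, and one must compute how $\chi_{[m,m]}$ evaluates on such block-diagonal permutations. Here the Murnaghan--Nakayama rule lets me relate $\chi_{[m,m]}(\sigma)$ to characters of smaller two-column diagrams by successively removing border strips corresponding to the forced cycles of the padding; because the diagram has width two, every rim hook that can be removed is either a single box or a domino, which keeps the recursion tightly controlled. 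Iterating this across a polynomial range of padding sizes $m$ yields a triangular (Vandermonde-like) system relating the evaluations $\im_{[m,m]}$ to the individual $\im_Y$, which I invert over $\mathbb Q$ exactly as in the fermionant argument to c-reduce each $\im_Y$ to the family $(\im_{[m,m]})$.

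The hard part will be the representation-theoretic bookkeeping: verifying that the block/padding construction produces, after applying Murnaghan--Nakayama, a \emph{nonsingular} and explicitly invertible linear system relating the square immanants to the general two-column immanants, rather than a degenerate one where the needed characters collapse or the scalars vanish. In particular I must ensure the coefficients $d_Y^2$ in Lemma~\ref{relation_between_ferm_and_imm} are not all zero on the diagrams I actually recover, so that reconstructing $\Ferm[2]$ genuinely uses a hard summand; and I must confirm that the border-strip removals forced by the padding cycles do not annihilate the character of the variable block (this is where the two-column restriction, and the fact that $[n,n]$ has no line longer than two so the permanent is not directly embedded, makes the combinatorics delicate). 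Once the system is shown invertible, membership in $\VNP$ follows from Lemma~\ref{relation_between_ferm_and_imm} and closure under c-reductions, and hardness follows by composing the reduction $\Ferm[2] \leq_c (\im_{[n,n]})$ with the $\VNP$-completeness of $\Ferm[2]$ from Theorem~\ref{ThmEvalution}.
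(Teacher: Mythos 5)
You have reconstructed the paper's strategy: fix $k=2$ in Lemma~\ref{relation_between_ferm_and_imm} so that $\Ferm[2]_n$ is a linear combination of the polynomially many two-column immanants $\im_{[l,n-l]}$; relate these to square immanants by evaluating $\im_{[l,l]}$ on the variable matrix padded with an encoding of a fixed cycle; apply the Murnaghan--Nakayama rule; solve the resulting linear system; and inherit hardness from Theorem~\ref{ThmEvalution}. This is exactly what the paper does. However, one of your supporting claims is false and, taken literally, would break the mechanism: you assert that in a width-two diagram ``every rim hook that can be removed is either a single box or a domino.'' The construction needs \emph{long} skew hooks. The padding attached to $\im_{[l,l]}$ is a single cycle of length $\delta = 2l-n$, and Murnaghan--Nakayama then removes a skew hook of exactly that size; if no such hook existed, the character would vanish and the padded evaluation would be identically zero. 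Width-two diagrams do admit long skew hooks: from $[l,l]$ one can remove the bottom $\delta$ boxes of the second column (leaving $[l,n-l]$), or the L-shaped strip through the bottom row and up the second column (leaving $[l-1,n-l+1]$). The correct reason the recursion is ``tightly controlled'' is not that hooks are short, but that a two-column diagram admits \emph{at most two} skew hooks of any given size --- which yields the two-term identity $\im_{[l,l]}(\bar x,\sigma_{\delta}) = (-1)^{\delta-1}\im_{[l,n-l]}(\bar x)+(-1)^{\delta}\im_{[l-1,n-l+1]}(\bar x)$ on which the whole argument rests.

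Second, the step you defer as ``the hard part'' --- nonsingularity of the linear system --- is precisely the content of the paper's proof, and it is simpler than the Vandermonde-style inversion you envisage: the system above is bidiagonal in $l$ and is anchored at one end by $\im_{[n,0]} = \det_n$, which is computable in polynomial time; back-substitution (the paper implements it through an explicit recursive choice of coefficients $\alpha_{[l,n-l]}$, giving a telescoping sum) reassembles $\sum_l d_{[l,n-l]}^2\,\im_{[l,n-l]} = \Ferm[2]_n$ up to polynomially computable correction terms, so no degeneracy can occur. Two smaller points: to land on the \emph{single} family $(\im_{[n,n]})$ rather than on oracles of many different sizes, you still need $\im_{[l,l]} \leq_c \im_{[l',l']}$ for $l' \geq l$ (erase the top $l'-l$ rows, Corollary 3.2 of~\cite{BB03}), which your sketch omits; and your membership argument is backwards --- Lemma~\ref{relation_between_ferm_and_imm} writes the fermionant in terms of immanants, not the immanant in terms of anything already known to lie in $\VNP$, so membership must be argued separately (it is standard, but it does not follow from that lemma).
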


\begin{center}
\begin{bf}Young diagrams\end{bf}

\includegraphics{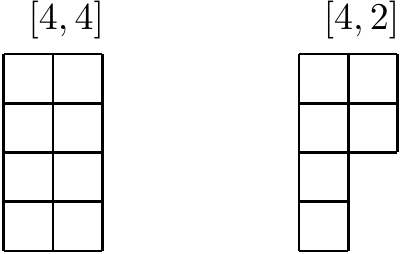} 
\end{center}

\begin{proof}
More generally, let $[l_1,l_2]$ be the two columns Young diagram with $l_1$ boxes in the first column and $l_2$ in the second. More specifically, the Young diagrams of width a most $2$ and of $n$ boxes are $([l,n-l])_{l \in [n/2,n]}$. Each of them can be obtained from the square diagram $[l,l]$ by removing a skew hook of size $\delta = (l - (n-l)) = 2l-n$. A skew hook in a Young diagram is a connected collection of boxes in the border of the diagram such that if you remove this hook it is still a Young diagram (i.e. the row sizes are still decreasing).  Furthermore, if you remove a skew hook of size $\delta$ to $[l,l]$, you can obtain only $[l,n-l]$ and $[l-1,n-l+1]$. The Murnaghan-Nakayama rule (c.f.~\cite{Bur00} chap. 7.2 for more details) states that:

\[\im_{[l,l]}(\bar x, \boldsymbol l) = (-1)^{l-1}\im_{[l,n-l]}(\bar x) + (-1)^l\im_{[l-1,n-l+1]}(\bar x)\]
Where $\boldsymbol l$ is an encoding of a cycle of length $l$. We know that, from Lemma~\ref{relation_between_ferm_and_imm}:

\[\Ferm[2]_n(\bar x) = \sum_{Y \in \Lambda_2^n} d_{Y}^2\im_{Y}(\bar x) = \sum_{l=n/2}^n d_{[l,n-l]}^2 \im_{[l,n-l]}(\bar x)\]

From those two facts, we can compute the fermionant from the square immanant. We just have to take new constants: let $\alpha_{[n-1,1]} = d_{[n-1,1]}(-1)^n$ and for any $ 2 \leq l \leq \frac n2$, $\alpha_{[n-l,l]} = (-1)^l(d_{[l,n-1]} - \alpha_{[l+1,n-l-1]}(-a)^{l+1})$. For simplicity, we write $\alpha_{\boldsymbol l} = \alpha_{[l,n-l]}$. If $n$ is even
\begin{align*}
	\sum_{l = n/2}^{n-1} &\alpha_{\textbf l} \im_{[l,l]}(\bar x, \boldsymbol{2l-n}) \\
	& = \sum_{l = n/2}^{n-1} \alpha_{\textbf l} \left(\left(-1\right)^{l-1}\im_{[l,n-l]}(\bar x) + \left(-1\right)^l\im_{[l-1,n-l+1]}(\bar x) \right)\\ 
	& = d_{[n-1,1]} \im_{[n-1,1]}(\bar x) + \sum_{l=n/2}^{n-2}d_{\textbf l} \im_{[l,n-l]}(\bar x)\\
	& - \sum_{l=n/2}^{n-2} \alpha_{\textbf{l+1}}(-1)^{l+1} \im_{[l,n-l]}(\bar x) + \sum_{l=n/2}^{n-1} \alpha_{\textbf l} (-1)^l\im_{[l-1,n-l+1]}(\bar x)\\
	& = \sum_{l= n/2}^{n-1}d_{\textbf l} \im_{[l,n-l]}(\bar x) - \sum_{l=n/2+1}^{n-1}\alpha_{\textbf l}(-1)^l \im_{[l-1,n-l+1]}(\bar x) \\
	& + \sum_{l=n/2}^{n-1}\alpha_{\textbf l}(-1)^l\im_{[l-1,n-l+1]}(\bar x)\\
	& =  \sum_{l= n/2}^{n-1}d_{\textbf l} \im_{[l,n-l]}(\bar x) + \alpha_{\textbf{n/2+1}} (-1)^{n/2}\im_{[n/2,n/2]}(\bar x)\\ 
\end{align*}

Furthermore, $\im_{[n,0]}(\bar x)=\det_n(\bar x)$ and then can be computed with only a polynomial number of arithmetic operations. 

Thus, $ \sum_{l = n/2}^{n-1} \alpha_{\textbf l} \im_{[l,l]}(\bar x, \boldsymbol{2l-n}) + \det_n(\bar x) + (-1)^{\frac n2}\alpha_{\frac{n}{2}+1} \im_{[\frac n2,\frac n2]}(\bar x)  = \sum_{l = n/2}^n d_{[l,n-l]} \im_{[l,n-l]} = \Ferm[2]_n(\bar x)$.
 
We obtain an arithmetic circuit of polynomial size that compute $\Ferm[2]_n$ with $n/2$ oracles that can compute $\im_{[l,l]}$ for $l \in [n/2,n]$. To obtain a c-reduction from the fermionant to the immant, we just have to notice that $\im_{[l,l]} \leq_p \im_{[l',l']}$ as soon as $l' \geq l$. Indeed, we just have to erase the first $l'-l$-th rows, which can be done by Corollary 3.2 of~\cite{BB03}.
 
The demonstration for $n$ odd works the same, the border cases must just be studied a little bit more closer.

 

\end{proof}

We can generalize this result to almost every family of bounded width. The proof is similar and is in the annex~\ref{thm_immanant_proof}.

\begin{thm}\label{thm_immanant}
Let $(Y_n)$ be a family of Young diagrams of length bounded by $k \geq 2$ such that $|Y_n| = \Omega(n)$. Then
\begin{itemize}
	\item if the number of boxes in the right of the first column if bounded by a constant $c$, then $(\im_{Y_n})$ is in $\VP$.
	\item otherwise, if there is an $\epsilon > 0$ and at least $n^{\epsilon}$ boxes at the right of the first column, $(\im_{Y_n})$ is $\VNP$-complete for c-reductions.
\end{itemize}
\end{thm}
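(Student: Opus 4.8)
The first bullet is the known polynomial-time case: when only boundedly many boxes lie to the right of the first column, $\chi_{Y_n}$ differs from the sign character only through a bounded amount of ``correction'' data, and the resulting immanant is evaluated in polynomial time; I would simply cite \cite{Bur00}. For the second bullet I must also record that $(\im_{Y_n}) \in \VNP$, so that hardness upgrades to completeness. By Valiant's criterion it suffices that the coefficient of the permutation monomial $\prod_i x_{i,\pi(i)}$ in $\im_{Y_n}$, namely the integer $\chi_{Y_n}(\pi)$, be computable in polynomial time. This the Murnaghan--Nakayama rule provides: peeling the cycles of $\pi$ in a fixed order, the reachable sub-diagrams all have width $\le k$, so a memoised recursion over (sub-shape, number of cycles already removed) has only polynomially many states. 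Hence the genuine content is the $\VNP$-hardness asserted in the second bullet.

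The plan is to c-reduce the fermionant $\Ferm[k]$, which is $\VNP$-complete by Theorem~\ref{ThmEvalution} since $k \ge 2$, to the family $(\im_{Y_n})$, following Proposition~\ref{Immanant_2_2} but now for arbitrary bounded width. The key leverage is Lemma~\ref{relation_between_ferm_and_imm}: one has $\Ferm[k]_M = \sum_{Y \in \Lambda_k^M} d_Y^k\,\im_Y$, so it is enough to extract, from the single oracle $\im_{Y_n}$, every immanant $\im_Y$ with $Y \in \Lambda_k^M$ (width $\le k$, exactly $M$ boxes) for a suitable target $M$. First I would use the row-erasing reductions of \cite{BB03} (Corollary 3.2), exactly as in Proposition~\ref{Immanant_2_2}, to replace $Y_n$ by a convenient canonical width-$k$ shape with a long first column and $\Theta(m(n))$ boxes to its right, where $m(n) \ge n^{\epsilon}$ is the number of boxes to the right of the first column. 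Then I would insert cycles of controlled lengths as arguments and apply Murnaghan--Nakayama to peel those boxes: each inserted $\delta$-cycle removes a border strip of size $\delta$ and rewrites the evaluated immanant as a signed sum of $\im_{Y'}$ over the sub-diagrams $Y'$ produced by that removal, the single-column remainder contributing only a determinant, which lies in $\VP$. Varying the inserted cycle data yields a linear system whose inversion recovers each individual $\im_Y$ with $Y \in \Lambda_k^M$; assembling these with the coefficients $d_Y^k$ reconstructs $\Ferm[k]_M$.

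For the bookkeeping, I would choose $M = \Theta(m(n))$ small enough that every shape in $\Lambda_k^M$ actually fits inside the canonical starting shape (its first column never exceeding the available length), so that the peeling can reach all of $\Lambda_k^M$; this still leaves $M \ge n^{\epsilon'}$ for some $\epsilon' > 0$. Since $|Y_n|$ is polynomially bounded (it is a $p$-family) and $M$ grows polynomially, given a target size $N$ I would take $n = \lceil N^{1/\epsilon'} \rceil$, so that $\im_{Y_n}$ is an immanant of a polynomially-large matrix and $M \ge N$. As $|\Lambda_k^N| = O(N^{k-1})$, the number of oracle calls and arithmetic gates stays polynomial, giving $\Ferm[k] \leq_c (\im_{Y_n})$ and hence $\VNP$-hardness.

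The hard part is the general peeling in the second paragraph. In the width-$2$ case of Proposition~\ref{Immanant_2_2} each border-strip removal produced only two shapes, so the coefficients could be chosen by hand to telescope along a chain. For width $k$ a border strip of a given size can be removed in several ways and the reachable sub-shapes form a poset rather than a chain, so the Murnaghan--Nakayama expansion carries many more terms. I expect to peel one corner (equivalently, one column) at a time, order $\Lambda_k^M$ compatibly with box removal so that the transition matrix becomes triangular, and choose the inserted cycle lengths so that its diagonal is non-vanishing with distinct entries, turning the recovery into a nested Vandermonde system that is invertible with polynomially bounded inverse. Proving this invertibility and bounding the interpolation coefficients uniformly in $k$ and $M$ is where essentially all the work lies; the $\VP$- and $\VNP$-memberships and the reduction wrapper are routine once Lemma~\ref{relation_between_ferm_and_imm} and Theorem~\ref{ThmEvalution} are in hand.
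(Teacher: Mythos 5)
Your first bullet and your explicit $\VNP$-membership argument are fine (the paper only addresses hardness, so spelling out membership via Valiant's criterion and Murnaghan--Nakayama is a welcome addition). The gap is in the hardness argument, and it sits exactly at its central claim. You propose to reduce $\mathrm{Ferm}^k$ to $(\im_{Y_n})$ by recovering \emph{every individual} immanant $\im_Y$, $Y \in \Lambda_k^M$, from oracle calls with inserted cycles, which requires the resulting linear system over the poset of width-$k$ shapes to be triangular with an invertible (``nested Vandermonde'') diagonal. You never prove this, and you concede it is ``where essentially all the work lies.'' It is not routine, and it is not obviously true: the coefficients in a Murnaghan--Nakayama expansion are forced (a sign per border strip, determined by its height), the only freedom is the choice of inserted cycle lengths, and for width $k>2$ a single strip removal already produces several incomparable sub-shapes; nothing in your sketch shows that the attainable linear functionals span each coordinate $\im_Y$. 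Note that even in the width-$2$ case the paper never inverts such a system: Proposition~\ref{Immanant_2_2} does not recover the individual $\im_{[l,n-l]}$; it hand-picks telescoping coefficients so that the one specific combination $\sum_Y d_Y^2 \im_Y = \mathrm{Ferm}^2_n$ of Lemma~\ref{relation_between_ferm_and_imm} appears, up to terms computable in $\VP$. A secondary gap: your preprocessing step replacing $Y_n$ by a ``convenient canonical width-$k$ shape'' is unjustified, since Corollary 3.2 of \cite{BB03} only permits deleting initial rows (or columns) of the given diagram, which cannot reshape an arbitrary $Y_n$ into a shape of your choosing.

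The paper avoids the whole difficulty by never interpolating at width $k$: after deleting a constant number of initial rows (if necessary) so that the last columns carry $\Omega(n^{\epsilon})$ boxes, it strips the first $k-2$ columns off $Y_n$ using Corollary 3.2 of \cite{BB03}, obtaining a two-column family $(\mu_n)$ with $(\im_{\mu_n}) \leq_p (\im_{Y_n})$. It then invokes a two-column proposition, proved by case analysis on the difference $\delta$ of the two column lengths, in which every Murnaghan--Nakayama peeling produces at most two sub-shapes, together with Proposition~\ref{Immanant_2_2}; the fermionant used is always $\mathrm{Ferm}^2$, regardless of $k$, so no system over $\Lambda_k^M$ ever has to be inverted. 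The realistic fix for your proof is to adopt this reduction-to-width-$2$ step; as written, your argument has a hole precisely where all the work was deferred.
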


\section{Conclusion and Perspectives}
The generalization via the fermionant tell us that the determinant is really special: the coefficients $1$ and $-1$ allows us, in a simplify way, to cancel some monomials and not to have to compute everything. The $k$ in the fermionant, even thinly different than $1$, separates these monomials and prevents the cancelations.

As for the immanant, the interpretation of the result is harder. Especially as our theorem does not completely classify immanants of constant width, what about the immanant of $[n,\log n]$? Burgisser's algorithm gives a subexponentiel upper bound, but does not put it in $\VP$. Howerver, under the extended Valiant hypothesis (end of chapter 2 in~\cite{Bur00}), it can not be $\VNP$-complete. Is it a good candidate to be neither $\VP$ nor $\VNP$-complete? Or even $\VP$-complete? Or is it as hard as the determinant? This is unknown.

Other generalizations also can be imagined. For example generating functions of a graph property are polynomials that generalize the permanent and some of them can be computed as fast as the determinant. This framework allows us to use our knowledge on graph theory to understand where we step from $\VP$ to $\VNP$. There is no classification of these generation functions, but some results have been found~\cite{Bur00,dRA12}.

I thank to both of my doctoral advisors, A. Durand and G. Malod as well to M. Casula and E. Boulat who try to explain to me the physic behind the fermionant.
\bibliographystyle{alpha}
\bibliography{biblio}

\appendix
\section{Appendix: The iff-gadget}~\label{annex}
We study the iff-gadget. This is done in two steps: first if we add one iff-gadget to the graph (lemma~\ref{add_one_iff}). Then if we add several gadgets but not in the same edges (lemma~\ref{add_several_simultaneously_iff}). The gadget still works if we add several on them, even on a same edge. We will not demonstrate that fact in general, just in the special case that interest us (lemma~\ref{copies_and_iff_gadget}).

\begin{lem}\label{add_one_iff}
Let $G$ be a graph, $\pi$ a cycle cover of this graph. We define its weight as $\omega(\pi) =  \prod_{e\in \pi}x_e$; and we write  $c(\pi)$ for the number of cycles in $\pi$. Let $e$,$e'$ be two edges of $G$. Let $G'$ be the same graph where we place an iff-gadget between these two edges. Let $\Pi(\pi)$ be the set of every cycle cover $\pi'$ of $G'$ which is equal to $\pi$ on $E(G)$. Then 

\begin{itemize}
\item if in $G$, $e$ and $e'$ are in $\pi$ then
$$\sum_{\pi' \in \Pi(\pi)} (-k)^{c(\pi')}\omega(\pi') = (-k)^{c(\pi)}\omega(\pi)$$

\item If $e$ is in $\pi$ and not $e'$, or vis versa,
$$\sum_{\pi' \in \Pi(\pi)} (-k)^{c(\pi')}\omega(\pi') =0$$

\item Finally if neither $e$ nor $e'$ are in $\pi$ then
$$\sum_{\pi' \in \Pi(\pi)} (-k)^{c(\pi')}\omega(\pi')  =  \left(\frac 12 (1-k)\right)(-k)^{c(\pi)}\omega(\pi) $$
\end{itemize}
\end{lem}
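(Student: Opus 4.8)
The plan is to treat this as a finite, local computation carried out entirely inside the single inserted gadget. Since $\pi$ is fixed on $E(G)$, every cover $\pi' \in \Pi(\pi)$ agrees with $\pi$ away from the gadget and can differ only in the way it routes through the gadget's internal vertices; because those vertices must all be covered, there are only finitely many local completions to consider. I would first fix notation for the gadget of Figure~1: label its internal vertices and internal edges together with their prescribed weights, and identify the two external ports at which the rerouted edges $e$ and $e'$ attach to the rest of $G'$. The key structural observation is that weight and cycle number split across the gadget boundary. Writing $\omega_{\mathrm{int}}(\pi')$ for the product of the internal edge weights used by $\pi'$, we have $\omega(\pi') = \omega(\pi)\,\omega_{\mathrm{int}}(\pi')$, and $c(\pi') = c(\pi) + \Delta(\pi')$, where $\Delta(\pi')$ is the number of cycles that close up strictly inside the gadget (a through-path that merely detours through the gadget and reconnects to an external arc contributes $0$ to $\Delta$). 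This factors $(-k)^{c(\pi)}\omega(\pi)$ out of every sum and reduces the lemma to evaluating the internal generating sum $S = \sum_{\pi'} (-k)^{\Delta(\pi')}\omega_{\mathrm{int}}(\pi')$ once in each of the three cases.

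I would then dispatch the three cases by enumerating the admissible internal completions compatible with the external behaviour of $\pi$. When both $e$ and $e'$ belong to $\pi$, the two external through-connections pin down the internal routing so tightly that the weighted internal sum collapses to $S = 1$, giving $(-k)^{c(\pi)}\omega(\pi)$. When exactly one of $e,e'$ belongs to $\pi$, the admissible completions come in pairs that have identical external connectivity but opposite net internal contribution, either because they carry internal weights of opposite sign at the same value of $\Delta$, or because they differ by a single closed internal loop whose weight is calibrated against the factor $-k$, so that $S = 0$; this cancellation is precisely the ``iff'' property. When neither edge is used, the completions are exactly the cycle covers of the gadget regarded as a closed subgraph, and summing $(-k)^{\Delta}\omega_{\mathrm{int}}$ over them produces the constant $S = \tfrac12(1-k)$, whence $\tfrac12(1-k)(-k)^{c(\pi)}\omega(\pi)$.

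The main obstacle is the cycle-count bookkeeping: the clean answers $1$, $0$ and $\tfrac12(1-k)$ emerge only after $\Delta(\pi')$ is determined correctly for each completion and combined with the gadget's exact edge weights. In particular, deciding whether a given internal edge-set closes a new cycle or instead joins onto an external through-path is the delicate point, since a miscount by one cycle changes a contribution by a factor of $-k$ and would destroy both the vanishing in the one-edge case and the precise constant in the no-edge case. Everything else is a routine verification once the gadget of Figure~1 is read off explicitly; I expect the vanishing case to demand the most careful pairing argument, since the cancellation must hold identically in $k$.
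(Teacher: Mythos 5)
Your overall framework coincides with the paper's: fix $\pi$, observe that the covers in $\Pi(\pi)$ differ only in their routing through the gadget, factor $(-k)^{c(\pi)}\omega(\pi)$ out of the sum, and evaluate the remaining internal sum $S$ by enumerating the finitely many local completions in each of the three cases. The genuine gap is not the bookkeeping you flag at the end, but the gadget itself. You propose to carry out the enumeration on ``the gadget of Figure~1'', which is Burgisser's iff-gadget, with weight matrix $\left(\begin{smallmatrix} -1 & 1 & \frac12 \\ 1 & 1 & -\frac12 \\ 1 & 1 & -\frac12 \end{smallmatrix}\right)$. Those weights do not involve $k$, and for a general rational $k$ the lemma is simply false for that gadget: the cancellation mechanism you correctly describe (``a single closed internal loop whose weight is calibrated against the factor $-k$'') requires a pair of completions with $w_1 + w_2(-k) = 0$, i.e.\ $w_2 = w_1/k$, which is impossible when all internal weights are constants independent of $k$. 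Burgisser's gadget is tailored to the permanent, where $(-k)^{c(\pi')}$ with $k=-1$ makes the cycle count invisible; the paper explicitly notes that it distorts the fermionant for other $k$. The real content of the paper's proof, which your plan never produces, is the \emph{construction} of a new, $k$-dependent gadget: three vertices $p_1,p_2,p_3$ attached between $e$ and $e'$ with weight matrix $\left(\begin{smallmatrix} \frac1k & 1 & \frac12 \\ 1 & -\frac1k & -\frac12 \\ 1 & 1 & \frac{1}{2k} \end{smallmatrix}\right)$, which specializes to Burgisser's gadget at $k=-1$.

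Even granting the correct gadget, your proposal only asserts the three values $S=1$, $S=0$, $S=\frac12(1-k)$; since those identities \emph{are} the lemma, deferring them as ``routine verification'' leaves nothing proved. The paper does the enumeration explicitly: when both $e$ and $e'$ are used there is a unique completion, of internal weight $1$ and creating no new cycle; when $e$ is used but not $e'$ there are two completions contributing $\frac12$ and $\frac{1}{2k}\cdot(-k)$, which cancel; when $e'$ is used but not $e$ there are two completions contributing $\frac{1}{-k}\cdot(-k)$ and $-\frac{1}{k^2}\cdot(-k)^2$, which cancel; when neither is used there are six completions whose contributions sum to $\frac12(1-k)$. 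Note in particular that the two single-edge cases are \emph{not} symmetric (the gadget matrix is not symmetric), so your single ``pairing argument'' has to be instantiated twice with different weights and different cycle increments; and your decomposition $c(\pi')=c(\pi)+\Delta(\pi')$ with $\Delta\ge 0$ tacitly assumes the gadget can never merge the two external cycles through $e$ and $e'$, a fact that also can only be checked against the explicit gadget. In short: same strategy as the paper, but the proposal omits the one idea the lemma actually turns on, and with the gadget it names, the computation it postpones would fail.
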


\begin{proof}

To demonstrate the completeness of the permanent, Burgisser~\cite{Bur00} has introduced two gadgets: the iff gadget and the Rosette. Our demonstration will be different, however we will use a variant of the iff gadget. Burgisser's iff-gadget is sketched on fig. $1$. If you place this gadget between the edges $e_1$ and $e_2$, it cancel in the calculation of the permanent every cycle cover that pass through $e_1$ but not $e_2$ (or vis versa). This is what we want, however this gadget change the number of cycles of the cycle covers and thus the value of the fermionant. 
\begin{center}
\begin{center}\begin{bf}Fig. 1 Burgisser's iff-gadget\end{bf}\end{center}
\includegraphics{fig1-pics.pdf} 

\end{center}

Fig. 1 is VBurgisser's iff gadget. It adds three vertices $p_1, p_2, p_3$ and connect these vertices according to the following matrix:
\[\begin{pmatrix}
   	-1 & 1 & \frac12 \\
	1 & 1 & -\frac12 \\
	1 & 1 & -\frac12 \\ 
\end{pmatrix} \]
\begin{center}
\begin{center}\begin{bf}Fig.2 Fermionant's iff-gadget\end{bf}\end{center}
\includegraphics{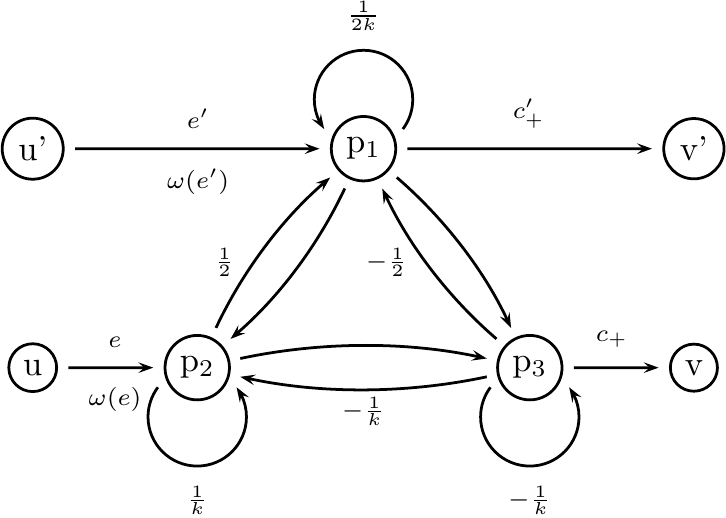} 
\end{center}

Fig. 2 is our iff gadget. We also add three vertices $p_1, p_2, p_3$, but we connect them with the following matrix:
\[\begin{pmatrix}
   	\frac1k & 1 & \frac12 \\
	1 & -\frac1k & -\frac12 \\
	1 & 1 & \frac{1}{2k} \\ 
\end{pmatrix} \]

 Let us now see our iff-gadget. Let $F$ be a graph, $e=(u,v)$ and $e'=(u',v')$ two of its edges and $F'$ the graph built on $F$ by adding our iff-gadget between $e$ and $e'$. For $\pi$ a cycle cover of $F$, let us write $\omega(\pi')$ for the sum of every cycle cover of $F'$ which matches $\pi$ on $F-\{e,e'\}$. Thus,
\begin{itemize}
\item if $e$ and $e'$ are in $\pi$, there is only one cycle covers that cover the gadget with the vertices $c'_-, c'_+, e, c_+$  and then $\omega(\pi')$ = $\omega(\pi)$.
\begin{center}
\begin{center}\begin{bf}Case $e$ and $e'$ in $\pi$\end{bf}\end{center}
\includegraphics{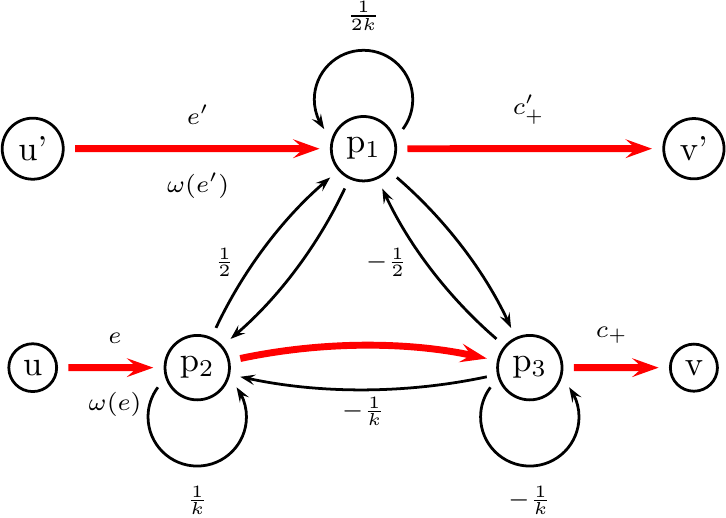} 
\end{center}

\item if $e$ is in $\pi$ but not $e'$, the two possible cycle covers are in blue and red. The red one has as many cycle as $\pi$ and for weight $\frac12\omega(\pi) $. The blue one has one more cycle than $\pi$ and for weight $\frac{1}{2k}\omega(\pi)$. Thus $$(-k)^{c(\pi')}\omega(\pi') = \frac12 (-k)^{c(\pi)}\omega(\pi) + \frac{1}{2k} (-k)^{c(\pi)+1}\omega(\pi) = 0$$
\begin{center}
\begin{center}\begin{bf}Case $e \in \pi$ and $e' \notin \pi$\end{bf}\end{center}
\includegraphics{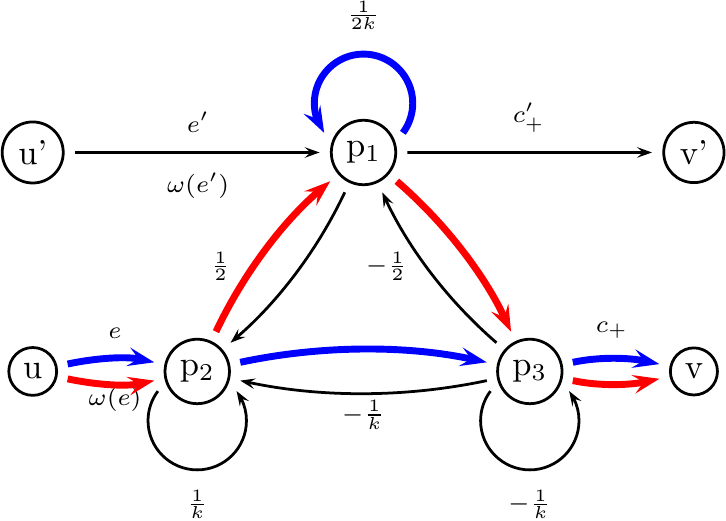} 
\end{center}

\item if $e'$ is in $\pi$ but not $e$, the two possible cycle covers are in blue and red. The red one has one more cycle than $\pi$ and for weight $\frac{1}{-k}\omega(\pi) $. The blue one has two more cycles than $\pi$ and for weight $-\frac{1}{k^2}\omega(\pi)$. Thus $$(-k)^{c(\pi')}\omega(\pi') = \frac{1}{-k} (-k)^{c(\pi)+1}\omega(\pi) - \frac{1}{k^2} (-k)^{c(\pi)+2}\omega(\pi) = 0$$
\begin{center}
\begin{center}\begin{bf}Case $e' \in \pi$ and $e \notin \pi$ \end{bf}\end{center}
\includegraphics{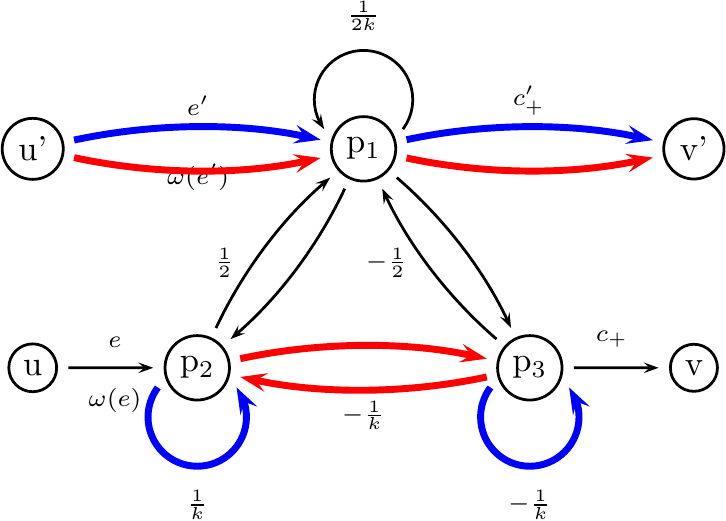} 
\end{center}
\item if neither $e$ nor $e'$ are in $\pi$,  then the six possible cycle covers are listed below. Thus,

\begin{align*}
(-k)^{\pi'}\omega(\pi') &= \left( -\frac{1}{2k^2}(-k)^2  -\frac{1}{2k}(-k)^2   -\frac{1}{2k}(-k)^2   -\frac{1}{2k^3}(-k)^3 - \frac{1}{2}(-k)^1  - \frac{1}{2k}(-k)^1\right) (-k)^{c(\pi)} \omega(\pi)\\
				& =   \left(-\frac{1}{2} -\frac{1}{2}k   -\frac{1}{2}k  +\frac{1}{2} + \frac{1}{2}k  + \frac{1}{2}\right) (-k)^{c(\pi)} \omega(\pi)\\
				& = \frac 12\left(1 - k\right) (-k)^{c(\pi)} \omega(\pi)
\end{align*}

\begin{center}
\begin{center}\begin{bf}Case $e$ and $e'$ not in $\pi$\end{bf}\end{center}
\includegraphics{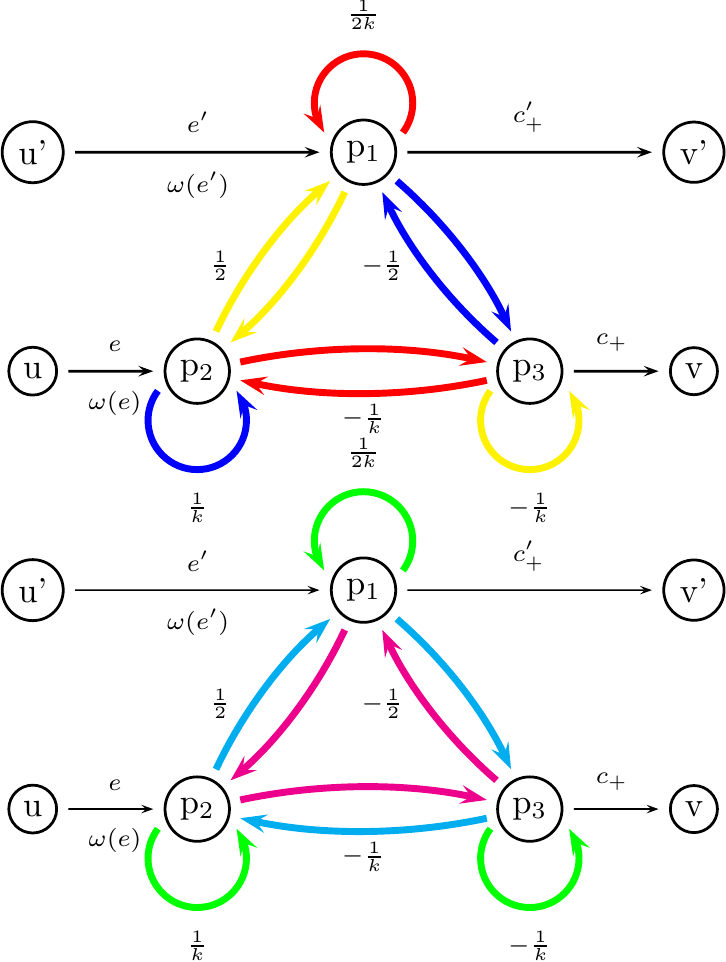} 
\end{center}
\end{itemize}

This end the demonstration. Notices that if $k=-1$ we have Burgisser's iff gadget. If both $e$ and $e'$ are loops (i.e. $u' = v'$ and $u=v$), the gadget is represented in the following scheme and I let the reader check that it works the same way.

\begin{center}
\begin{center}\begin{bf}Loop case\end{bf}\end{center}
\includegraphics{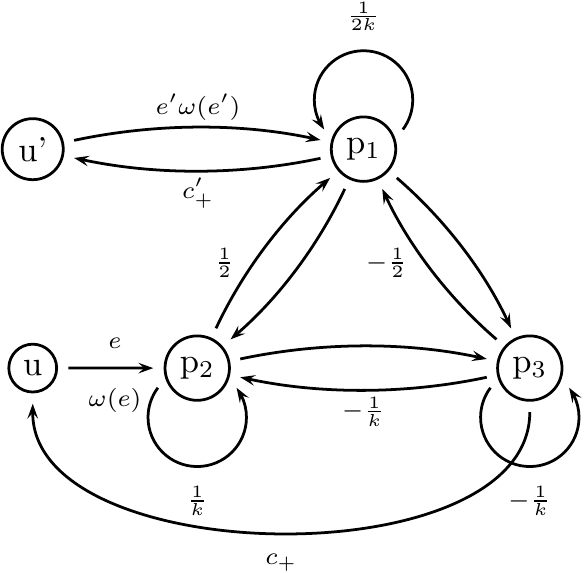} 
\end{center}
\end{proof}

We add now several iff-gadget simultaneously in a graph.
\begin{lem}\label{add_several_simultaneously_iff_proof}
Let $G$ be a graph with $n$ vertices and $(e^i_1,e^i_2)_{1\leq i\leq l}$ be a set of pairs of edges of $G$ such that no two edges in this set are equal. Let $G'$ be the same graph but where we place an iff-gadget between every pair $(e^i_1,e^i_2)$. Let $\pi$ be a cycle cover of $G$, $\Pi(\pi)$ be the set of cycle covers of $G'$ that match $\pi$ on $E(G)$.
\begin{itemize}
\item If there is a pair $(e^i_1,e^i_2)$ of edges such that $e^i_1 \in \pi$ and $e^i_2 \notin \pi$, or vis versa, then
$$ \sum_{\pi' \in \Pi(\pi)}(-k)^{c(\pi')}\omega(\pi') =0$$

\item Else, let $d(\pi)$ be the number of pair  $(e^i_1,e^i_2)$ of edges such that $e^i_1\notin \pi$ and $e^i_2 \notin \pi$. Then
$$ \sum_{\pi' \in \Pi(\pi)}(-k)^{c(\pi')}\omega(\pi') = \left(\frac 12(1-k)\right)^{d(\pi)}(-k)^{c(\pi)}\omega(\pi)$$
\end{itemize}
\end{lem}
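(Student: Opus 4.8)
The plan is to induct on the number $l$ of gadgets, using the single--gadget analysis of Lemma~\ref{add_one_iff} as the engine. The structural fact that makes the induction go through is that the hypothesis forces all $2l$ edges $e^1_1,e^1_2,\dots,e^l_1,e^l_2$ to be pairwise distinct, so the gadgets sit on pairwise edge--disjoint locations and interact with each other only through the cycle cover $\pi$ they are read off. For the base case $l=1$ there is nothing to prove: the three alternatives of Lemma~\ref{add_one_iff} (both edges in $\pi$, exactly one in $\pi$, neither in $\pi$) are precisely the two alternatives of the present statement, with $d(\pi)\in\{0,1\}$ and the factor $\tfrac12(1-k)$ appearing exactly when $d(\pi)=1$.

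For the inductive step I would build the intermediate graph $\hat G$ obtained from $G$ by inserting only the first $l-1$ gadgets, and note that $G'$ arises from $\hat G$ by inserting one further gadget between $e^l_1$ and $e^l_2$. Because the edges are distinct, $e^l_1$ and $e^l_2$ are untouched by the first $l-1$ gadgets and hence remain genuine edges of $\hat G$, so Lemma~\ref{add_one_iff} applies to $\hat G$ with this last gadget. Writing $\hat\Pi(\pi)$ for the cycle covers of $\hat G$ matching $\pi$ on $E(G)$ and, for $\hat\pi\in\hat\Pi(\pi)$, $\Lambda(\hat\pi)$ for the cycle covers of $G'$ matching $\hat\pi$ on $E(\hat G)$, one gets the disjoint decomposition $\Pi(\pi)=\bigsqcup_{\hat\pi}\Lambda(\hat\pi)$. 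Since every $\hat\pi\in\hat\Pi(\pi)$ agrees with $\pi$ on $E(G)\supseteq\{e^l_1,e^l_2\}$, whether $e^l_1,e^l_2$ lie in $\hat\pi$ is decided by $\pi$ alone, and Lemma~\ref{add_one_iff} yields
\[\sum_{\lambda\in\Lambda(\hat\pi)}(-k)^{c(\lambda)}\omega(\lambda)=c_l\,(-k)^{c(\hat\pi)}\omega(\hat\pi),\]
where the constant $c_l$ equals $0$, $1$, or $\tfrac12(1-k)$ according as exactly one, both, or neither of $e^l_1,e^l_2$ lie in $\pi$.

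The last step is to sum over $\hat\pi$ and pull out the constant:
\[\sum_{\pi'\in\Pi(\pi)}(-k)^{c(\pi')}\omega(\pi')=c_l\sum_{\hat\pi\in\hat\Pi(\pi)}(-k)^{c(\hat\pi)}\omega(\hat\pi),\]
and then to apply the induction hypothesis to the inner sum, which concerns the $l-1$ gadgets of $\hat G$. If some pair among the first $l-1$ carries exactly one edge of $\pi$, the inner sum vanishes; if pair $l$ does, then $c_l=0$; either way the product is $0$, giving the first alternative. Otherwise every pair carries both or neither of its edges, the inner sum equals $\bigl(\tfrac12(1-k)\bigr)^{d'(\pi)}(-k)^{c(\pi)}\omega(\pi)$ with $d'(\pi)$ counting the empty pairs among the first $l-1$, and multiplying by $c_l$ raises the exponent to $d(\pi)=d'(\pi)$ when pair $l$ is full and to $d(\pi)=d'(\pi)+1$ when pair $l$ is empty, matching the claimed formula.

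I expect the only genuine obstacle to be the bookkeeping that legitimises this factorisation. One must check that inserting the $l$-th gadget really leaves the earlier gadgets (their auxiliary vertices and weighted edges) and the edges $e^l_1,e^l_2$ intact, so that the decomposition $\Pi(\pi)=\bigsqcup_{\hat\pi}\Lambda(\hat\pi)$ is valid and Lemma~\ref{add_one_iff} may be invoked with $\hat G$ in the role of its base graph. This is exactly the point where the distinctness of the $2l$ edges is used; once it is in place the remainder is a routine combination of the cases above.
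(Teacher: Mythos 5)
Your proposal is correct and follows essentially the same route as the paper's own proof: induction on the number of gadgets, peeling one gadget off to form the intermediate graph, applying Lemma~\ref{add_one_iff} to that gadget, and invoking the induction hypothesis on the remaining $l-1$ gadgets via the decomposition $\Pi(\pi)=\bigsqcup_{\hat\pi}\Lambda(\hat\pi)$. The only cosmetic difference is that in the vanishing case the paper removes the specific offending gadget and applies Lemma~\ref{add_one_iff} to it last, whereas you always peel off the $l$-th gadget and absorb both subcases uniformly into the factor $c_l$; both variants are sound.
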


\begin{proof}
By induction on $l$. If $l=1$, i.e., we add only one iff-gadget, then this case has been dealt with in Lemma~\ref{add_one_iff}. Suppose the lemma true for $l-1$ iff-gadgets. Let $\pi$ be a cycle cover of $G$. 

\begin{itemize}
\item If we are in the first case, then let $(e^i_1,e^i_2)$ be a pair of edges such that $e^i_1 \in \pi$ and $e^i_2 \notin \pi$. Let $G^i$ be the graph with all the same iff-gadgets than $G'$ but the one between $e^i_1$ and $e^i_2$. By induction the sum of every cycle covers of $G^i$ that match $\pi$ on $E(G)$ is $0$ or $ \left(\frac 12(1-k)\right)^{d(\pi)}(-k)^{c(\pi)}\omega(\pi)$. Let $\pi'$ one of those cycle covers. Remarque that $e^i_1 \in \pi'$ and $e^i_2 \notin \pi'$.

We add an iff-gadget between $e^i_1$ and $e^i_2$ in $G^i$ and obtain $G'$. We can consider $G'$ as a graph with only one iff-gadget and apply lemma~\ref{add_one_iff} with this graph and $\pi'$. Then the sum of every cycle covers that match $\pi'$ on $E(G^i)$ is $0$. Thus the result.

\item If we are in the second case, i.e., every edges of the same pair are simultaneously in or out of $\pi$. Let $G^l$ be the same graph as $G'$ but with no iff-gadget between $e^l_1$ and $e^l_2$. By induction, if we write $\Pi^l(\pi)$ the set of cycle cover of $G^l$ that match $\pi$ on $E(G)$ and $d^l(\pi)$ the number of pair that are in $\pi$, but $(e^l_1,e^l_2)$,
$$ \sum_{\pi' \in \Pi^l(\pi)}(-k)^{c(\pi')}\omega(\pi') = \left(\frac 12(1-k)\right)^{d^l(\pi)}(-k)^{c(\pi)}\omega(\pi)$$

We now see $G'$ as a graph with only one iff-gadget, the one between $e^l_1$ and $e^l_2$. Let $\pi' \in \Pi^l(\pi)$ and $\Lambda$ be the set of cycle covers of $G'$ that match $\pi'$ on $E(G')$ minus every edges of this iff-gadget. If $e_1^l$ and $e_2^l$ are in $\pi$, then they are in $\pi'$ and 
$$ \sum_{\lambda \in \Lambda(\pi')}(-k)^{c(\lambda)}\omega(\lambda) = (-k)^{c(\pi')}\omega(\pi')$$

Thus,
\begin{align*}
\sum_{\pi' \in \Pi(\pi)}(-k)^{c(\pi')}\omega(\pi')&=\sum_{\pi' \in \Pi^l(\pi)}\sum_{\lambda \in \Lambda(\pi')} (-k)^{c(\lambda)}\omega(\lambda)\\
& = \sum_{\pi' \in \Pi^l(\pi)} (-k)^{c(\pi')}\omega(\pi')\\
& = \left(\frac 12(1-k)\right)^{d(\pi)}(-k)^{c(\pi)}\omega(\pi)
\end{align*}

If $e_1^l$ and $e_2^l$ are not in $\pi$, then they are not in $\pi'$ and

$$ \sum_{\lambda \in \Lambda(\pi')}(-k)^{c(\lambda)}\omega(\lambda) = \left(\frac 12(1-k)\right)(-k)^{c(\pi')}\omega(\pi')$$

Thus,
\begin{align*}
\sum_{\pi' \in \Pi(\pi)}(-k)^{c(\pi')}\omega(\pi')&=\sum_{\pi' \in \Pi^l(\pi)}\sum_{\lambda \in \Lambda(\pi')} (-k)^{c(\lambda)}\omega(\lambda)\\
& = \sum_{\pi' \in \Pi^l(\pi)}  \left(\frac 12(1-k)\right)(-k)^{c(\pi')}\omega(\pi')\\
& = \left(\frac 12(1-k)\right)^{d^l(\pi)+1}(-k)^{c(\pi)}\omega(\pi)\\
& = \left(\frac 12(1-k)\right)^{d(\pi)}(-k)^{c(\pi)}\omega(\pi)
\end{align*}

\end{itemize}
\end{proof}

\section{Appendix: Counting Fermionant}\label{counting_fermionant_proof}
The proof here is similar as Valiant's~\cite{Val792}, but with every gadgets adapt to work with the fermionant, i.e., each gadget is design to compensate the number of cycles it add, or to add a constant number of cycles.
\begin{thm}
For every $k \neq 1$ and $k \neq 0$, $\Ferm[k]$ is $\#P$-complete for matrices over $\{0,1\}$
\end{thm}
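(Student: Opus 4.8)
The plan is to adapt Valiant's proof that the $0/1$ permanent is $\#P$-complete, building on the reduction already used for Theorem~\ref{ThmEvalution}. Containment is the easy half: for $A \in \{0,1\}^{n \times n}$ one has $\Ferm[k](A) = (-1)^n \sum_{m=1}^n (-k)^m c_m$, where $c_m$ is the number of cycle covers of the associated digraph having exactly $m$ cycles; each $c_m$ is a $\#P$ function and $k$ is fixed, so $\Ferm[k](A)$ is a fixed rational combination of polynomially many $\#P$ quantities (an integer, and a genuine $\#P$ function, when $k \in \mathbb Z$). The substantive half is hardness, which I would prove by reducing the counting of Hamiltonian cycles $\Ham$ — known to be $\#P$-complete — to evaluating $\Ferm[k]$ on matrices over $\{0,1\}$. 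For $k=-1$ the statement is already Valiant's classical theorem, since $\Ferm[-1](A) = (-1)^n \per(A)$, so I would assume $k \notin \{0,1,-1\}$.

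For the hardness reduction I would keep the structure of the proof of Theorem~\ref{ThmEvalution}: given a digraph $G$, form the copy graphs $P_l G$ of Lemma~\ref{copies_and_iff_gadget} for $l = 1, \dots, n$, query the oracle on each, and recover $c_1 = \Ham(G)$ by inverting the Vandermonde system in the $\omega_l = (-k)^l$. The inversion coefficients $\omega^*_{l,m}$ and the known scalars $\left(\tfrac12(1-k)\right)^{(|E(G)|-n)(l-1)}$ are produced by the polynomial-time reduction itself, so they are harmless; the one obstruction is that $P_l G$ is not a $0/1$ graph, because the iff-gadget of Figure~2 carries the rational entries $\tfrac1k, \tfrac12, \tfrac{1}{2k}, -\tfrac12, -\tfrac1k$.

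To eliminate these rationals I would use Valiant's weight-simulation machinery. First clear denominators, turning each rational edge weight into an integer and recording the resulting global scalar (a known power of the denominator of $2k$), to be divided out at the end. Each integer weight is then realized by a $0/1$ subgraph: small positive weights by explicit gadgets and larger ones by binary composition of weight-doubling gadgets, while negative weights are simulated modulo a large prime $N$, a weight $-w$ being replaced by the positive weight $N-w$. I would take $N$ with polynomially many bits, exceeding in absolute value every quantity produced in the reduction (all of polynomial bit-length) and coprime to the denominator of $k$, so that the inversion is valid modulo $N$ and the exact integer $\Ham(G)$ is read off from its residue.

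The hard part, and the only genuinely new point beyond Valiant's permanent argument, is cycle-count sensitivity: the fermionant weights each cover by $(-k)^{c(\pi)}$, so a simulating gadget may not freely alter the number of cycles. I would therefore design every $0/1$ gadget so that all of its admissible internal routings contribute the \emph{same, known} number of cycles — that is, so it adds a constant number of cycles independent of how the surrounding cover threads through it — so that the factor $(-k)^{\mathrm{const}}$ it introduces is a known scalar, removed together with the denominator-clearing factors. Checking, gadget by gadget, that each one simultaneously realizes its intended weight modulo $N$ \emph{and} is cycle-neutral in this sense is the delicate but routine verification announced at the head of this appendix.
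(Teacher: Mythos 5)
Your plan follows essentially the same route as the paper's own appendix proof: it reuses the Vandermonde interpolation over the copy graphs $P_l G$ from Theorem~\ref{ThmEvalution} to recover the Hamiltonian, clears the rational gadget weights by a global scalar, pushes negative entries into positive residues modulo a large modulus, and replaces integer weights by $0/1$ gadgets engineered to add a \emph{constant} number of cycles regardless of how the cover threads through them — which is precisely the role of the paper's diamond and loop gadgets. The only divergences are bookkeeping: you invert modulo a prime $N$ coprime to the relevant denominators where the paper instead multiplies both sides by $\omega$ and powers of $2k$ and $-k$ to keep everything integral, and you explicitly dispatch $k=-1$ to Valiant's permanent theorem (a case the paper's appendix argument silently inherits, since its Vandermonde step degenerates there).
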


\begin{proof}
Let $k $ be a integer neither null nor equal to $1$. We have demonstrated in Theorem~\ref{ThmEvalution} that there exists rationals $w_i^*$ such that for every rational matrix $A$:

$$ \Ham(A) = \sum_{i=1}^n w_i^* \Ferm[k](P_iA) $$

$P_l$ is a transformation on matrices which use rational constants (it is the transformation that duplicate the underlying graph $i$-th time and add iff-gadgets). Then, even if the Hamiltonian is known to be $\#P$-complete for $\{0,1\}$ matrices, we can only conclude from this equality that the fermionant is $\#P$-complete for rational matrices. We must work harder in order to obtain the $\#P$-completeness for $\{0,1\}$-matrices, first by obtaining the completeness for nonnegative integers, then any numbers and at last for $\{0,1\}$. 

The only  non-integer rationals used in the computation of $\Ferm[k](P_iA)$ are $\frac{1}{k}$, $\frac12$, $\frac{1}{2k}$ and their opposites, if there are no non-integer rationals in $A$. To get rid of those, we just have to multiply every edge weight of $P_iA$ by $2k$. Then, if we call $n_i$ the size (i.e., the number of rows) of $P_i A$ and $P_i' A$ the new matrix, which is an integer matrix, we have:

$$\Ham(A) =  \sum_{i=1}^n w_i^* (\frac{1}{2k})^{n_i} \Ferm[k](P_i' A) $$

From this new equation, we can say that the fermionant is $\#P$-complete for integer matrices. For the rest of the demonstration, we will suppose that $A$ is a $\{0,1\}$-matrix. To go from integers to natural numbers, i.e., to get ride of the $-1$, we will use a classical method, used by Valiant~\cite{Val792} to demonstrate the $\#P$-completeness of the permanent: we compute everything modulo a great number $\Lambda$. Therefore, we can switch every $-1$ into $\Lambda+1$.

 The biggest integer in $P_l' A$ is $2k$. Then, if we call $n_i$ the size of $P_i(A)$,
 
 \begin{align*}
 \Ferm[k](P_l'A) & = \sum_{\pi \in S_{n_i}} (-k)^{c(\pi)} \prod_{j=1}^{n_i} (P_i'A)_{j, \pi(j)}\\
 			& \leq \sum_{\pi \in S_{n_i}} k^{n_i} \prod_{j=1}^{n_i} 2k\\
			& \leq n_i! k^{n_i} (2k)^{n_i}
 \end{align*}
 
To make sur that $\Lambda$ is bug enough, we take $\Lambda >  \sum_{i=1}^n |w_i^*|n_i!(2k)^{2n_l}$. Let us switch in every $P_i' A$ every negative integer $-m$ by $(\Lambda+1)\times m$ and let us call $P_i''A$ the result. It is a matrix of natural numbers. Therefore

$$ \Ham(A) \equiv  \sum_{i=1}^n  w_i^* (\frac{1}{2k})^{n_i}  \Ferm[k](P_i'' A) \left[ \Lambda \right]$$

This equation has of course no meaning, as $w_i^* (\frac{1}{2k})^{n_i} $ is a rational and cannot be computed modulo $\Lambda$. Let  $w_i^* = \frac{p_i}{q_i}$, where $p_i$ and $q_i$ are integers and $\omega = \prod_{i=1}^n q_i^* $. Thus $\check w_i^* := \omega \times w_i^*$ is an integer. Let also $m_i = n_n - n_i$, where $n_i$ is the size of $P'_i A$. The right equation is then

$$ \Ham(A) \times (2k)^{n_n} \times \omega \equiv  \sum_{i=1}^n \check w_i^* (2k)^{m_i}  \Ferm[k](P_i'' A) \left[ \Lambda \right]$$

The next step is to replace every edge labeled by a natural number $a$ by a gadget with only edges labeled by $1$. In Fig. $4$ we give an example of this gadget for an edge $e$ between $u$ and $v$ labeled with $a = 20 = 2^2+2^4$. A cycle cover $\pi$ that pass thought $e$ in the graph now can go in $16$ different ways thought $p_1^4, p_2^4$ and $p_3^4$ and in $4$ different ways trought $p_1^2$. We want that the number of cycles this gadget add to be the same, whatever path $\pi$ took. The loop-gadgets (centered in $q_i^j$) are here for that.

\begin{center}
\begin{center}\begin{bf}Fermionant's gadget\end{bf}\end{center}
\includegraphics{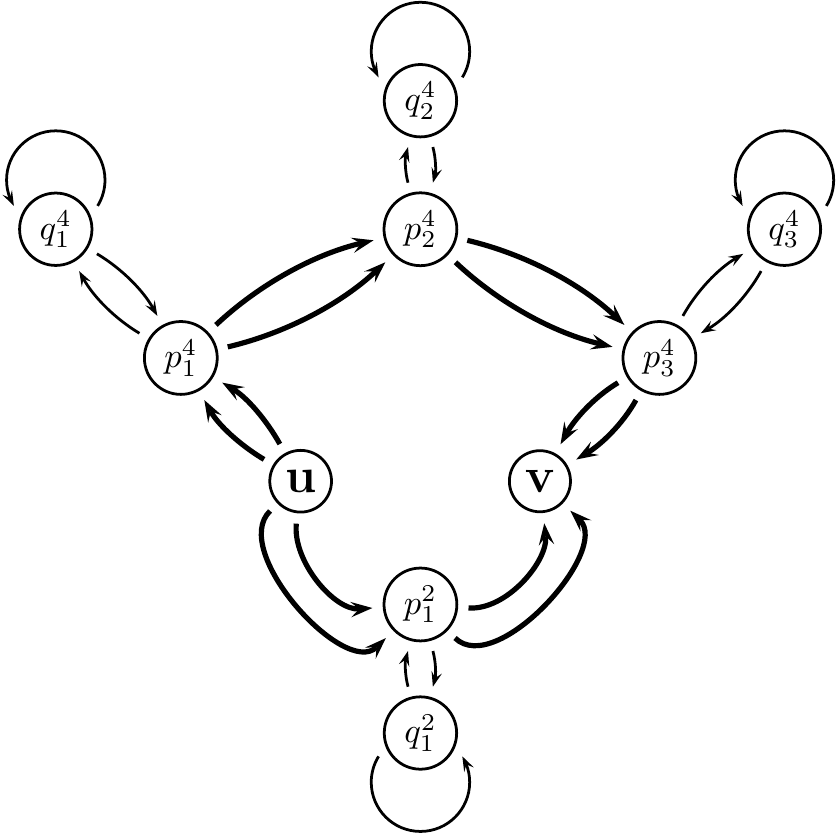} 
\end{center}
Now instead of an edge that counts in a cycle cover for a weight $a$ we have $a$ different cycle covers that pass through this edge. However, this gadget needs $\{0,1,2\}$ matrices to represent the double edges. We have to replace in this gadget the double edges by a diamond-gadget:
\begin{center}
\begin{center}\begin{bf}Diamond gadget\end{bf}\end{center}
\includegraphics{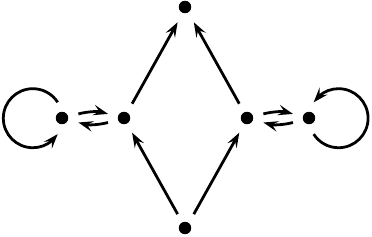} 
\end{center}

The idea of this diamond gadget is obvious. Here also the loop-gadgets are place to keep constant the number of cycles we add. 

Let $e$ be an edge between two vertices $u$ and $v$ labeled by a non negative integer $a$ in a graph $G$. In base $2$, $a$ is written 
$$a = \sum_{i=1}^{|a|}a_i2^i$$

We delete the edge $e$. For every $i \leq |a|$ such that $a_i \neq 0$, we add $i-1$ new vertices $p_1^i, \hdots, p_{i-1}^i$ and we connect $p_j^i$ to $p_{j+1}^i$ with a diamond gadget. The first of this new vertices is connected similarly with $u$ and the last with $v$. Finally we add on every new vertex $p_j^i$ a new looped vertex $q_j^i$ linked by a cycle to $p_j^i$ (i.e., we add on every new vertix a loop-gadget).

\begin{center}
\begin{center}\begin{bf}Loop gadget\end{bf}\end{center}
\includegraphics{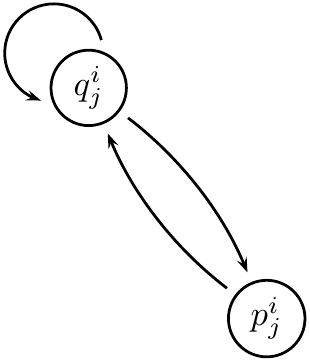}
\end{center}

Let $G'$ the graph where $e$ has been replaced by the gadget $M(a)$. If $\pi$ is a cycle cover of $G$, let us write $\Check \omega(\pi)$ the sum of every cycle cover $\pi'$ in $G'$ that match $\pi$ in $G'-\{e\}$.  

If $e \in \pi$ then $\omega(\pi)  (-k)^{|a|+2(|a|+1)} = \Check \omega(\pi)$, where $|a|$ is the number of bits of $a$. Indeed, the weight is the same but we add with our gadget $|a|+1$ diamond gadgets and $|a|$ loop gadgets. Therefore, we add $2(|a|+1) + |a|$ new cycles. Furthermore, because of the loop-gadget we add, even if $e \notin \pi$ we still have  $\omega(\pi)  (-k)^{|a|+2(|a|+1)}  = \Check \omega(\pi)$. If both cases the final result is multiplied by $(-k)^{|a|+2(|a|+1)} $. 

If we switch every labeled edge in $P_i'' A$ by a similar gadget, we multiply the result by a power of $(-k)$. However, this power does not depend on $A$, as it is a $\{0,1\}$-matrix.

Indeed let us see where the edges labeled by something else that $0$ or $1$ are in $P_l''A$.. Let us call $G$ the complete graph of size $n$ (i.e. the graph where we put the labels given by $A$). We have performed the following transformations on $G$:
\begin{itemize}
	\item $G_l$: we duplicate $G$ $l$ times.
	\item $G_l'$: we add to every vertex a loop-gadget.
	\item $G_l''$: we add on every edge which is not in a loop-gadget an iff-gadget.
	\item $G_l'''$: we add on every main edge of a loop-gadget an iff-gadget.
	\item $G_l^{\textit{IV}}$: we multiply every label by $2k$.
	\item $G_l^{\textit{V}}$: we change every negative label by its opposite multiplied by $\Lambda$.
\end{itemize}

Let us summarize for every gadget the non zero and non $1$ edges we have. Let us noticed that there are no edges beside those on gadgets.

\begin{itemize}
	\item For an iff-gadget put on an edge $e$ which is not on a loop-gadget. $e$ therefore has a weight $0$ or $1$ in $G$. The iff-gadget has:
	\begin{itemize}
		\item $8$ edges labeled by $2k$
		\item $1$ edge labeled by $k$
		\item $1$ edge labeled by $k\Lambda$
		\item $1$ edge labeled by $2$
		\item $1$ edge labeled by $2\Lambda$
		\item $1$ edge labeled by $1$ 
	\end{itemize}
	\item For an iff-gadget put on an edge $e$ which is on a loop-gadget. $e$ therefore has a weight $-\frac1k$ in $G$. The iff-gadget has:
	\begin{itemize}
		\item $6$ edges labeled by $2k$
		\item $1$ edge labeled by $k$
		\item $1$ edge labeled by $k\Lambda$
		\item $1$ edge labeled by $2$
		\item $3$ edges labeled by $2\Lambda$
		\item $1$ edge labeled by $1$ 
	\end{itemize}
	\item For a loop-gadget. The main edge of this gadget is accounted for in the previous iff-gadget. The loop-gadget has:
	\begin{itemize}
		\item $1$ edge labeled by $2k$
		\item $1$ edge labeled by $2\Lambda$
	\end{itemize}
\end{itemize}

There are $l\times n$ loop-gadgets, $(l-1) \times n$ iff-gadgets put on a loop-gadget and $\frac{n(n-1)}2 \times (l-1)$ iff gadgets put a on regular edge in $G_l^{\textit{V}}$. We therefore multiply the final result by $(-k)^{\gamma_l}$ where 
$\gamma_l = \left(8|2k| + |k| + |K\Lambda| + |2| + |2\Lambda|\right) \left(\frac{n\left(n-1\right)}2  \left(l-1\right)\right) + \left(6|2k| + |k| + |K\Lambda| + |2| +3 |2\Lambda|\right)\left(l-1\right)n + \left(|2k| + |2\Lambda|\right)$.

Thus if we write $P'''_l A$ the $\{0,1\}-$matrix obtained from $P_i'' A$ by replacing every edge labeled with $a\neq0,1$ by the gadget $M(a)$, 

$$\Ham(A) \times (2k)^{n_n} \times \frac1{\omega} \equiv  \sum_{i=1}^n \check w_i^* (2k)^{m_i} (\frac{1}{-k})^{\gamma_l}  \Ferm[k](P'''_i A) \left[ \Lambda\right]$$

As before, this equation has no meaning, as we try to compute rationals modulo $\Lambda$. But, if we write $\gamma = \max \gamma_i$,

$$\Ham(A) \times (2k)^{n_n} \times (-k)^{\gamma}` \times \frac1{\omega} \equiv  \sum_{i=1}^n \check w_i^* (2k)^{m_i} (-k)^{\gamma - \gamma_l}  \Ferm[k](P'''_i A) \left[ \Lambda\right]$$

And therefore the $\Ferm[k]$ is $\#P$-complete for $\{0,1\}$ matrices. Note that $P_i''' A$ has a size polynomialy bounded in the size of $A$.

\end{proof}

\section{Appendix: Immanant, proof of Theorem~\ref{thm_immanant}}\label{thm_immanant_proof}
We have demonstrate in Lemma~\ref{Immanant_2_2} that the family of immanants of a $[n,n]$ Young diagrams is hard. Here is a generalization needed to demonstrate Theorem~\ref{thm_immanant}. 

\begin{prop}
Let $(Y_n)$ be a family of Young diagrams with two columns and at least $\Omega(n^{\epsilon})$ boxes in the last one, for $\epsilon > 0$. Then $(\im_{Y_n})$ is $\VNP$-complete for c-reductions.
\end{prop}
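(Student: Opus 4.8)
The plan is to c-reduce the $\VNP$-complete family $\Ferm[2]$ (Theorem~\ref{ThmEvalution}) to $(\im_{Y_n})$, in the exact spirit of Proposition~\ref{Immanant_2_2}, the only new feature being that the two columns of $Y_n=[a_n,b_n]$ now differ by a fixed amount $d=a_n-b_n\geq 0$ instead of being equal. Membership in $\VNP$ is standard: for a two-column diagram the character $\chi_{Y_n}(\pi)$ is computable in polynomial time (for instance through the Murnaghan--Nakayama rule), so the usual characterisation of $\VNP$ (\cite{Bur00}) puts $(\im_{Y_n})$ in $\VNP$. I therefore concentrate on hardness.

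Given a target size $m$, I would first fix the oracle index. Since the last column has $b_n=\Omega(n^\epsilon)$ boxes, there is an $n=O(m^{1/\epsilon})$, polynomial in $m$, with $b_n\geq m$; this is exactly what keeps the reduction a genuine c-reduction, and is precisely where the hypothesis $n^\epsilon$ is used (a merely logarithmic number of boxes would force a superpolynomial blow-up, explaining why $[n,\log n]$ stays open). All the operations below are evaluations of the single oracle polynomial $\im_{[a_n,b_n]}$ at matrices that I design: erasing the top $t$ rows of the diagram is the pinning of $t$ indices, which by Corollary~3.2 of~\cite{BB03} returns exactly $\im_{[a_n-t,\,b_n-t]}$; and feeding a block encoding a cycle of length $\delta$ (written $\boldsymbol\delta$) triggers the Murnaghan--Nakayama rule. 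Thus from the one oracle I have access to every difference-$d$ two-column immanant $\im_{[s+d,\,s]}$ with $s\leq b_n$, each of which I may further evaluate at a cycle encoding.

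The heart of the argument is the linear-algebraic reconstruction of $\Ferm[2]_m=\sum_{l} d_{[l,m-l]}^2\,\im_{[l,m-l]}$ (Lemma~\ref{relation_between_ferm_and_imm}) from these immanants. Applying Murnaghan--Nakayama to $\im_{[s+d,s]}(\bar x,\boldsymbol\delta)$ removes all size-$\delta$ border strips of $[s+d,s]$; a short case analysis shows there are at most three kinds — a vertical strip in the first column (result $[s+d-\delta,s]$, only for $\delta\leq d$), a vertical strip in the second column (result $[s+d,s-\delta]$), and a bent strip (result $[s-1,s+d+1-\delta]$, only for $\delta\geq d+2$) — so that
\[ \im_{[s+d,s]}(\bar x,\boldsymbol\delta)=\pm\,\im_{[s+d-\delta,\,s]}(\bar x)\;\pm\;\im_{[s+d,\,s-\delta]}(\bar x)\;\pm\;\im_{[s-1,\,s+d+1-\delta]}(\bar x), \]
with signs given by the heights of the strips and terms dropped when out of range. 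Specialising $\delta$ and $s$ and forming a telescoping linear combination (as in Proposition~\ref{Immanant_2_2}, where $d=0$ simply killed the first term) reconstructs the required combination of $m$-box immanants; equivalently, one first isolates a single large square immanant $\im_{[N,N]}$ with $N=\Theta(b_n)$ and then invokes Proposition~\ref{Immanant_2_2} directly.

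The main obstacle is exactly the extra term that the fixed difference $d\neq 0$ creates: for $d=0$ the rule yields only two terms and they telescope cleanly, whereas for $d>0$ the second-column strip contributes an immanant of strictly larger difference $d+\delta$. The task is to show that, as $\delta$ and the erasing depth range over their admissible values, the resulting system of linear relations is still invertible — which I would do by arranging it to be triangular in the difference parameter, the unwanted terms always having strictly larger difference and hence being eliminable from the bottom up — and to check that every immanant called for stays within reach, namely with second column at most $b_n$. The parity of $m$ and the small-$s$ boundary cases demand the same separate, routine bookkeeping as in Proposition~\ref{Immanant_2_2}.
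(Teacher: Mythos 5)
Your reduction skeleton --- choosing the oracle index polynomially thanks to $b_n=\Omega(n^{\epsilon})$, erasing rows via Corollary 3.2 of~\cite{BB03}, expanding $\im_{[s+d,s]}(\bar x,\boldsymbol\delta)$ by Murnaghan--Nakayama (your three-strip inventory and its validity ranges are correct), and eliminating unwanted terms by linear combinations --- is exactly the mechanism of the paper's proof. The genuine gap is your opening assumption that the two columns ``differ by a fixed amount $d=a_n-b_n$''. The proposition imposes no such bound: $a_n-b_n$ may grow with $n$ (e.g.\ $Y_n=[2b_n,b_n]$ or $Y_n=[n,\lceil n^{\epsilon}\rceil]$ are admissible families), and the paper devotes two of its three cases to precisely that regime, with arguments different from yours: when $a_n-b_n>b_n$ the skew hook of size $a_n-b_n$ is unique, so a single oracle call $\im_{[a_n,b_n]}(\bar x,\sigma_{a_n-b_n})=(-1)^{a_n-b_n-1}\im_{[b_n,b_n]}(\bar x)$ already produces a square immanant; when $a_n-b_n\le b_n$ is $\Omega(n^{\epsilon})$, erasing rows down to $[2\delta,\delta]$ (with $\delta=a_n-b_n$) and one Murnaghan--Nakayama application give $\im_{[\delta,\delta]}+\im_{[2\delta,0]}$, where the second term is a determinant, and Proposition~\ref{Immanant_2_2} applies to the size-$\delta$ square.

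This restriction is not cosmetic, because your elimination scheme breaks when the difference grows. However you organize the ``triangular'' system, its bottom consists of two-column immanants with up to roughly $d$ boxes in the second column (they are the analogue of the paper's residual terms, which have at most $2\delta+2$ boxes in the last column); these must be computed outright, and the tool for that --- Burgisser's algorithm, invoked by the paper at exactly this point --- runs in time exponential in the number of boxes to the right of the first column, hence polynomial only when $d=O(1)$. To keep your uniform cascade you would need an extra device, for instance choosing the target square size $s$ with $(d+1)\mid(s+1)$ so that the second columns $s-d,\,s-2d-1,\dots$ bottom out exactly at $\im_{[\cdot,0]}=\det$; with such a fix your approach would in fact be \emph{more} uniform than the paper's trichotomy (which covers neither $a_n-b_n$ unbounded but $o(n^{\epsilon})$, e.g.\ $\Theta(\log n)$). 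As written, though, your proposal proves the proposition only for constant difference; and even there the decisive step --- the invertibility of the linear system, which is where the paper explicitly constructs coefficients $a_{[l,n-l]}$ recombining into the $d_{[l,n-l]}$ of Lemma~\ref{relation_between_ferm_and_imm} and explicitly disposes of the residual terms --- is announced as ``the task'' rather than carried out.
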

\begin{proof}
For $m \in \mathbb N$ let $k_1^m$ be the size of the first column of $Y_m$ and $k_2^m$ the size of the second column. We will in this proof consider the case where the difference between those two sizes grows like $n$ and the case where it is constant.

\begin{itemize}
\item If $(k_1^m-k_2^m) > k_2^m$ (i.e. $k_1^m > 2 k_2^m$), then there is only one skew hook of size $\delta := k_1^m-k_2^m$ that can be removed from $[k_1^m,k_2^m]$ (in red in the picture). If we apply the Murnaghan-Nakayama Rule,

\[\im_{[k_1^m,k_2^m]}(\bar x, \sigma_{\delta}) = (-1)^{\delta-1}\im_{[k_2^m,k_2^m]}(\bar x)\]

Where $\sigma_{\delta} $ is a cycle of size $\delta$. Thus, $(\im_{[k_2^m,k_2^m]}) \leq_c (\im_{[k_1^m,k_2^m]})$ and it is $\VNP$-complete by proposition~\ref{Immanant_2_2}

\begin{center}
\includegraphics{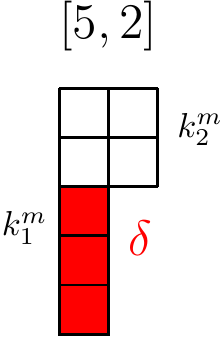} 
\end{center}

\item If $k_2^m \leq k_1^m \leq 2k_2^m$ and $\delta^m := (k_1^m-k_2^m) = \Omega(n^{\epsilon)}$ for an $\epsilon > 0$. First, by using Corollary 3.2 of \cite{BB03} (the projection they referring to is a type of c-reduction, we remove the first $k_2^m - \delta^m$ lines from $[k_1^m,k_2^m]$ (as $\delta^m \leq k_2^m$). We obtain the Young diagram $[2\delta^m,\delta^m]$. Then we apply the Murnaghan-Nakayama Rule, with a skew hook of size $\delta^m$. 

\[\im_{2\delta^m,\delta^m}(\bar x, \sigma_{\delta^m}) = (-1)^{\delta^m-1}\left(\im_{[\delta^m,\delta^m]}(\bar x) + \im_{[2\delta^m,0]}(\bar x)\right)\]

But, $\im_{[2\delta^m,0]}(\bar x) = \det(\bar x)$ and it can be computed with a number of operation polynomial in $\delta$. Therefore, with only a polynomial number of additions, multiplications and evaluations of $\im_{[k_1^m,k_2^m]}$ we can compute $\im_{[\delta^m,\delta^m]}$, which is $\VNP$-complete by the proposition~\ref{Immanant_2_2}.

\item Else, i.e., if $k_2^m \leq k_1^m \leq 2k_2^m$ and  $\delta^m := (k_1^m-k_2^m)$ is bounded by a constant.  We fixe $m \in \mathbb N$ and we write $\delta$ for $\delta^m$, $k_1$ for $k_1^m$ and $k_1$ for $k_2^m$.

We apply the Murnaghan-Nakayama Rule with a skew hook of size $2l_n+\delta +2$. There are only two such skew hooks:
\begin{align*}
&\sum_{l=\frac n2 + \delta+1}^{n - \delta -1}a_{[l,n-l]}\im_{[l+1+\delta,n-l]}(\bar x, \sigma_{2l-n+\delta+2}) \\
= &\sum_{l=\frac n2 + \delta+1}^{n - \delta -1} a_{[l,n-l]}(-1)^{2l-n+\delta+1} \left(\im_{[l,n-l]}+ \im_{[l+\delta+1,n-l-\delta-1]}\right) \\
\end{align*}

We put together the same immanants.
\begin{align*}
= &(-1)^{\delta-n+1}\left(\sum_{l=\frac n2 + \delta+1}^{n - \delta -1} a_{[l,n-l]}\im_{[l,n-l]} + \sum_{p=\frac n2 + 2\delta+2}^{n} a_{[p-\delta-1,n-p+\delta+1]}\im_{[p,n-p]}\right) \\
= &(-1)^{\delta-n+1}\left(\sum_{l=\frac n2 + 2\delta+2}^{n - \delta -1}\im_{[l,n-l]} \left(a_{[l,n-l]} + a_{[l-\delta-1,n-l+\delta+1]}\right) + \sum_{l=\frac n2+\delta+1}^{\frac n2 +2\delta+1}a_{[l,n-l]}\im_{[l,n-l]}\right.\\
&\left.+ \sum_{p=n-\delta}^n a_{[p-\delta-1,n-p+\delta+1]}\im_{[p,n-p]}\right) \\
\end{align*}

The $a_{[i,j]}$ have been built such that $ \left(a_{[l,n-l]} + a_{[l-\delta-1,n-l+\delta+1]}\right) = d_{[l,n-l]}$. Ergo,
\begin{align*}
=  (-1)^{\delta-n+1}\left(\sum_{l=\frac n2 + \delta+1}^{n - \delta -1} \im_{[l,n-l]} d_{[l,n-l]} \right.&\left.+ \sum_{l=0}^{\delta}a_{[\frac n2 +\delta+1+l,\frac n2 -\delta - 1 - l]}\im_{[\frac n2 +\delta+1+l,\frac n2 -\delta - 1 - l]}\right.\\
&\left. + \sum_{l=0}^{\delta}a_{[n-2\delta-1+l,2\delta+1-l]}\im_{[n-2\delta-1+l,2\delta+1-l]}\right)\\
\end{align*}

And by Lemma~\ref{relation_between_ferm_and_imm}, we have

\begin{align*}
=   (-1)^{\delta-n+1}\left(\Ferm[2]_n -\right.&\left. \sum_{l=\frac n2}^{\frac n2 + \delta}d_{[l,n-l]} \im_{[l,n-l]} - \sum_{l=n-\delta}^nd_{[l,n-l]} \im_{[l,n-l]} \right)\\
& +  (-1)^{\delta-n+1}\left( \sum_{l=0}^{\delta}a_{[\frac n2 +\delta+1+l,\frac n2 -\delta - 1 - l]}\im_{[\frac n2 +\delta+1+l,\frac n2 -\delta - 1 - l]}\right)\\
& +  (-1)^{\delta-n+1}\left( \sum_{l=0}^{\delta}a_{[n-2\delta-1+l,2\delta+1-l]}\im_{[n-2\delta-1+l,2\delta+1-l]}\right)\\
=   (-1)^{\delta-n+1}\left(\Ferm[2]_n -\right.&\left. \sum_{l=0}^{\delta}d_{[\frac n2+l,\frac n2-l]} \im_{[\frac n2+l,\frac n2-l]} - \sum_{l=0}^{\delta}d_{[n-\delta+l,\delta-l]} \im_{[n-\delta+l,\delta-l]} \right)\\
& +  (-1)^{\delta-n+1}\left( \sum_{l=0}^{\delta}a_{[\frac n2 +\delta+1+l,\frac n2 -\delta - 1 - l]}\im_{[\frac n2 +\delta+1+l,\frac n2 -\delta - 1 - l]}\right)\\
& +  (-1)^{\delta-n+1}\left( \sum_{l=0}^{\delta}a_{[n-2\delta-1+l,2\delta+1-l]}\im_{[n-2\delta-1+l,2\delta+1-l]}\right)
\end{align*}

We have to compute from $\im_{[k_1,k_2]}$ the residual terms. $ \sum_{l=0}^{\delta}a_{[n-2\delta-1+l,2\delta+1-l]}\im_{[n-2\delta-1+l,2\delta+1-l]}$ and $\sum_{l=0}^{\delta}d_{[n-\delta+l,\delta-l]} \im_{[n-\delta+l,\delta-l]}$ have both less than $2\delta+2$ boxes in the last column. With the algorithm found in~\cite{Bur00}, we can compute those immanants in a number of operations polynomially bounded in $n$ but exponential in $\delta$. However, $\delta$ is here a constant. Therefore, those residual terms can be counted in a polynomial number of operations.

Now, let us look at 
\[\sum_{l=0}^{\delta}a_{[\frac n2 +\delta+1+l,\frac n2-\delta-1-l]}\im_{[\frac n2 +\delta+1+l,\frac n2 -\delta - 1 - l]}\]
We use a technique similar to before.
\begin{align*}
&\sum_{l=0}^{\delta}a_{[\frac n2+\delta+1+l,\frac n2-\delta-1-l]}\im_{[\frac n2+\delta+1+l,\frac n2+1+l]}(\bar x, \sigma_{\delta+2l+2})\\
&=\sum_{l=0}^{\delta}a_{[\frac n2+\delta+1+l,\frac n2-\delta-1-l]}\left((-1)^{\delta+2l+1}\im_{[\frac n2+\delta+1+l,\frac n2-1-\delta-l]}+(-1)^{\delta+2l+2}\im_{[\frac n2+l,\frac n2-l]}\right)
\end{align*}

Indeed, there are two skew hooks of size $\delta+2l+2$ that we can remove from $[\frac n2+\delta+1+l,\frac n2+1+l]$: the one in the second column and the one represented in red in the graphic.
\begin{center}
\includegraphics{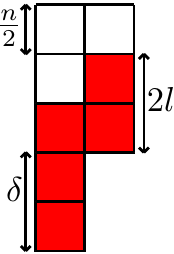} 
\end{center}

Therefore, from $\im_{[\frac n2+2\delta+1,\frac n2+1+\delta]}$ we can compute in only a polynomially number of operations 
\[\sum_{l=0}^{\delta}a_{[\frac n2+\delta+1+l,\frac n2-\delta-1-l]}\im_{[\frac n2+\delta+1+l,\frac n2-1-\delta-l]}+\sum_{l=0}^{\delta}a_{[\frac n2+\delta+1+l,\frac n2-\delta-1-l]}\im_{[\frac n2+l,\frac n2-l]}\]
The first term is the one we were trying to compute, the second one can be added to the last residual term. Finally, we have to compute from $\im_{[k_1,k_2]}$ 

\[ \sum_{l=0}^{\delta}(d_{[\frac n2+l,\frac n2-l]} + a_{[\frac n2+\delta+1+l,\frac n2-\delta-1-l]}) \im_{[\frac n2+l,\frac n2-l]}\]

We use the same technique.
\begin{align*}
\sum_{l=0}^{\delta}b_{[\frac n2+l, \frac n2-l]} &\im_{[\frac n2-l+\delta, \frac n2-l]}(\bar x, \sigma_{\delta -2l}) \\
=& \sum_{l=0}^{\delta}b_{[\frac n2+l, \frac n2-l]} (-1)^{\delta-2l+1}\left(\im_{[\frac n2+l,\frac n2-l]} + \im_{[\frac n2-l+\delta,\frac n2 -\delta + l]}\right)\\
=&  (-1)^{\delta+1}\left(\sum_{l=0}^{\delta}b_{[\frac n2+l, \frac n2-l]}im_{[\frac n2+l,\frac n2-l]} + \sum_{p=0}^{\delta}b_{[\frac n2 + \delta - p, \frac n2+p-\delta]}\im_{[\frac n2+p,\frac n2-p]} \right)\\
=&  (-1)^{\delta+1}\sum_{l=0}^{\delta}im_{[\frac n2+l,\frac n2-l]}\left(b_{[\frac n2+l,\frac n2-l]}+b_{[\frac n2 + \delta - p, \frac n2+p-\delta]} \right)\\
=& (-1)^{\delta+1} \sum_{l=0}^{\delta}(d_{[\frac n2+l,\frac n2-l]} + a_{[\frac n2+\delta+1+l,\frac n2-\delta-1-l]}) \im_{[\frac n2+l,\frac n2-l]}\\
\end{align*}

\end{itemize}
\end{proof}

\begin{thm}
Let $(Y_n)$ be a family of Young diagram of length bounded by $k \leq 2$ such that: $|Y_n| = \Omega(n)$. Then
\begin{itemize}
	\item if the number of boxes in the right of the first column if bounded by a constant $c$, then $(\im_{Y_n})$ is in $\VP$.
	\item otherwise, if there is at least $n^{\epsilon}$ boxes at the right of the first column, $(\im_{Y_n})$ is $\VNP$-complete for c-reductions.
\end{itemize}
\end{thm}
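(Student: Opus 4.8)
The plan is to read the final theorem as a short corollary of the two ingredients already assembled, splitting along the two bullets exactly as stated. Since the width is bounded by $2$, every $Y_n$ is either a single-column diagram or a two-column diagram $[l_1^n,l_2^n]$ with $l_1^n\geq l_2^n$, and the number of boxes to the right of the first column is precisely the second-column size $l_2^n$ (which is $0$ in the single-column, i.e.\ determinant, case). So the whole statement reduces to understanding these two parameters.

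For the first bullet, suppose $l_2^n\leq c$ for a constant $c$. Here I would simply invoke Bürgisser's algorithm (the one referenced in the introduction, cf.~\cite{Bur00}), which evaluates an immanant in time polynomial in $|Y_n|$ whenever the number of boxes outside the first column is bounded. The only point worth stressing is that the hidden cost of that algorithm is exponential in the number of such boxes, hence in $c$, which is fixed; so the running time stays polynomial in $n$ uniformly over the whole family. Since $|Y_n|$ is polynomially bounded in $n$, this places $(\im_{Y_n})$ in $\VP$, and no reduction is needed.

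For the second bullet, suppose there are at least $n^{\epsilon}$ boxes to the right of the first column. Because the width is at most $2$, this says exactly that $Y_n=[l_1^n,l_2^n]$ is a genuine two-column diagram with $l_2^n=\Omega(n^{\epsilon})$, which is verbatim the hypothesis of the preceding Proposition. That Proposition's three-case Murnaghan--Nakayama argument already establishes $\VNP$-completeness for c-reductions in this regime, so I would just apply it directly. The underlying substance has already been done: Proposition~\ref{Immanant_2_2} interpolates $\Ferm[2]$ from the square immanants $\im_{[l,l]}$, and the skew-hook reductions of the preceding Proposition reduce a square immanant to $\im_{Y_n}$.

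The one bookkeeping point I would check is the consistency of the global growth hypothesis $|Y_n|=\Omega(n)$ with the Proposition's setup: in the width-$2$ case $|Y_n|=l_1^n+l_2^n$, and since $l_1^n\geq l_2^n$ this forces $l_1^n=\Omega(n)$, matching the regime in which the reduction to $\Ferm[2]$ operates and ensuring the family is genuinely a p-family. There is really no main obstacle at this stage, which is the point: the theorem is the packaging of the preceding Proposition with the known $\VP$ upper bound, and the two bullets are exhaustive for width-$\leq 2$ diagrams because the second-column size is either bounded by a constant or eventually exceeds $n^{\epsilon}$.
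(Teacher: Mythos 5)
Your reading of the statement is literally faithful, but it misses what the paper actually proves here. The ``$k \leq 2$'' in this appendix theorem is a typo for ``$k \geq 2$'': the theorem restates Theorem~\ref{thm_immanant} from the body (``length bounded by $k \geq 2$''), and the paper's own proof manipulates a general constant width $k$ --- it speaks of removing ``the first $k-2$ columns'' and of an index $l < k$ --- all of which would be vacuous under your reading. Your proposal therefore only covers the degenerate case $k = 2$, where the theorem does indeed collapse to the preceding two-column Proposition plus B\"urgisser's algorithm for the $\VP$ upper bound; for that case your argument is correct and coincides with what the paper's proof specializes to. What is missing is the entire reduction from width $k$ to width $2$, which is the real content of the paper's proof: (i) a preprocessing step ensuring that the relevant columns ``count'' --- if the last $l$ columns together contain only a bounded number $c$ of boxes while the $(l+1)$-st column from the right does not, one deletes the first $c$ rows (a legitimate reduction by Corollary 3.2 of~\cite{BB03}; since the first rows are the longest, this empties the trailing $l$ columns while destroying only $O(c)$ boxes per column, preserving the $\Omega(n^{\epsilon})$ count); and (ii) the removal of the first $k-2$ columns, producing two-column diagrams $\mu_n$ with $\Omega(n^{\epsilon})$ boxes in the second column, to which the two-column Proposition applies, with $(\im_{\mu_n}) \leq_p (\im_{Y_n})$ again by~\cite{BB03}. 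Without steps (i) and (ii) your proof says nothing about, for instance, a family of three-column diagrams, which the intended statement covers.

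A further slip: your closing claim that the two bullets are exhaustive is false even for width $2$. The second-column size can be unbounded yet $o(n^{\epsilon})$ for every $\epsilon > 0$ --- for example $\Theta(\log n)$ --- and the paper explicitly flags $[n, \log n]$ as open in its conclusion. The theorem's ``otherwise, if'' phrasing quietly excludes this intermediate regime, so exhaustiveness is neither claimed nor available; your proof does not need it, but asserting it is incorrect.
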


\begin{proof}
The first case is a result of the algorithm found in~\cite{Bur00}. For the second case, we can suppose that every column counted, i.e., that the number of boxes in the last row is of $\Omega(n^{\epsilon})$ for an $\epsilon > 0$. Indeed, if it is not, let $l < k$ be such that the number of boxes in the last $l$ columns is bounded by a constant $c$ in every $Y_n$ but the number of boxes in the $l+1$ last column is not. Then if we remove from every $Y_n$ the first $c$ rows then for every $n$, $Y_n$ has no boxes in the last $l$ columns and $\Omega(n^{\epsilon})$ in the $l+1$ last column.

Now that every column has at a nonconstant number of boxes, especially the last two, let us remove the first $k-2$ columns. The Young diagrams we obtain, $\mu_n$, have only two columns but at least $\Omega(n^{\epsilon})$ boxes for an $\epsilon >0$. And the second column has a non-bounded number of boxes. Therefore, we are in the case of $k=2$ and $(\im_{\mu_n})$ is $\VNP$-complete. Furthermore, we have, by Corollary 3.2 of~\cite{BB03}, $(\im_{\mu_n}) \leq_p (\im_{Y_n})$. Thus $(\im_{Y_n})$ is $\VNP$-complete.

\end{proof}

\end{document}